\newtheorem{proposition}{Proposition} %
\newtheorem{remark}{Remark}
\newtheorem{lemma}{Lemma}
\newtheorem{definition}{Definition}
\theoremstyle{plain}
\newif\if@restonecol
\newcommand\numeq[1]%
\newcommand\numeqq[1]%
\newcommand\numleq[1]%
\newcommand\numgeq[1]%
\newcommand\numimp[1]%
\newcommand\norm[1]{\lVert#1\rVert}
\let\oldnl\nl% Store \nl in \oldnl
\newcommand{\nonl}{\renewcommand{\nl}{\let\nl\oldnl}}
\lstdefinestyle{mystyle}{
    backgroundcolor=\color{backcolour},
    commentstyle=\color{codegreen},
    keywordstyle=\color{magenta},
    numberstyle=\tiny\color{codegray},
    stringstyle=\color{codepurple},
    basicstyle=\ttfamily\footnotesize,
    breakatwhitespace=false,
    breaklines=true,
    captionpos=b,
    keepspaces=true,
    showspaces=false,
    showstringspaces=false,
    showtabs=false,
    tabsize=2,
    xleftmargin=50pt,
    xrightmargin=50pt
  }
\definecolor{gray2}{HTML}{ededed}
\definecolor{gray3}{HTML}{F5F5F5}
\definecolor{RoyalAzure}{rgb}{0.0, 0.22, 0.66}
\definecolor{lightgray}{gray}{0.9}
\definecolor{lightgray}{gray}{0.9}
\definecolor{lightgreen}{rgb}{0.88, 1, 0.88}
\definecolor{lightred}{rgb}{1, 0.88, 0.88}
\definecolor{lightblue}{rgb}{0.88, 0.94, 1}
\definecolor{lightorange}{rgb}{1, 0.94, 0.88}
\tikzset{set/.style={draw,circle,inner sep=0pt,align=center}}
  \tikzstyle{abstractbox} = [draw=black, fill=white, rectangle,
\tikzstyle{abstracttitle} =[fill=white]
\DeclareMathOperator*{\argmin}{arg\,min}
\tikzstyle{cblue}=[circle, draw, thin,fill=cyan!20, scale=0.8]
\tikzstyle{qgre}=[rectangle, draw, thin,fill=green!20, scale=0.8]
\tikzstyle{rpath}=[ultra thick, red, opacity=0.4]
\tikzstyle{legend_isps}=[rectangle, rounded corners, thin,
\tikzstyle{legend_overlay}=[rectangle, rounded corners, thin,
\tikzstyle{legend_phytop}=[rectangle, rounded corners, thin,
\tikzstyle{legend_general}=[rectangle, rounded corners, thin,
\colorlet{myRed}{red!20}
\tikzset{
  rows/.style 2 args={/utils/temp/.style={row ##1/.append style={nodes={#2}}},
    /utils/temp/.list={#1}},
  columns/.style 2 args={/utils/temp/.style={column ##1/.append style={nodes={#2}}},
    /utils/temp/.list={#1}}}
\definecolor{switch}{HTML}{006996}
    \pgfmathsetlength\pgfutil@tempdima{\pgfkeysvalueof{/pgf/parallelepiped
      offset x}}
    \pgfmathsetlength\pgfutil@tempdimb{\pgfkeysvalueof{/pgf/parallelepiped
      offset y}}
    \def\ppd@offset{\pgfpoint{\pgfutil@tempdima}{\pgfutil@tempdimb}}
\tikzset{anchor/.append code=\let\tikz@auto@anchor\relax,
  add font/.code=%
    \expandafter\def\expandafter\tikz@textfont\expandafter{\tikz@textfont#1},
  left delimiter/.style 2 args={append after command={\tikz@delimiter{south east}
    {south west}{every delimiter,every left delimiter,#2}{south}{north}{#1}{.}{\pgf@y}}}}
\tikzstyle{sms} = [rectangle callout, draw,very thick, rounded corners, minimum height=20pt]
\tikzset{anchor/.append code=\let\tikz@auto@anchor\relax,
  add font/.code=%
    \expandafter\def\expandafter\tikz@textfont\expandafter{\tikz@textfont#1},
  left delimiter/.style 2 args={append after command={\tikz@delimiter{south east}
    {south west}{every delimiter,every left delimiter,#2}{south}{north}{#1}{.}{\pgf@y}}}}
\tikzstyle{sms} = [rectangle callout, draw,very thick, rounded corners, minimum height=20pt]
\tikzstyle{block} = [rectangle, draw,
\tikzstyle{line} = [draw, -latex]
\tikzset{
  mybackground51/.style={execute at end picture={
      \begin{scope}[on background layer]
        \draw[black, rounded corners=2ex, fill=gray2] (current bounding box.south west)
        rectangle (current bounding box.north east);
        \node[draw,fill=white,ellipse,anchor=west,inner sep=1pt,minimum width=1ex] at (current bounding box.north
        west){#1};
      \end{scope}
    }},
}
\tikzset{
  mybackground9/.style={execute at end picture={
        \begin{scope}[on background layer]
          \draw[black,fill=black!5,rounded corners=6ex] (current bounding box.south west)
                    rectangle (current bounding box.north east);
          \node[draw,fill=white,ellipse,anchor=west,inner sep=1pt,minimum width=4ex] at (current bounding box.north
                   west){#1};
        \end{scope}
    }},
}
\tikzset{
  mybackground13/.style={execute at end picture={
        \begin{scope}[on background layer]
          \draw[black, fill=gray2, rounded corners=4ex] (current bounding box.south west)
                    rectangle (current bounding box.north east);
          \node[draw,fill=white,ellipse,anchor=west,inner sep=1pt,minimum width=4ex] at (current bounding box.north
                   west){#1};
        \end{scope}
    }},
}
\tikzset{
  mybackground14/.style={execute at end picture={
        \begin{scope}[on background layer]
          \draw[black, rounded corners=2ex] (current bounding box.south west)
                    rectangle (current bounding box.north east);
          \node[draw,fill=white,ellipse,anchor=west,inner sep=1pt,minimum width=4ex] at (current bounding box.north
                   west){#1};
        \end{scope}
    }},
}
\tikzset{
  mybackground6/.style={execute at end picture={
        \begin{scope}[on background layer]
          \draw[black,rounded corners=1ex, line width=0.15mm] (current bounding box.south west)
                    rectangle (current bounding box.north east);
          \node[draw,fill=white,ellipse,anchor=west,inner sep=1pt,minimum width=4ex] at (current bounding box.north
                   west){#1};
        \end{scope}
    }},
}
\tikzset{
  mybackground11/.style={execute at end picture={
        \begin{scope}[on background layer]
          \draw[black, fill=Black!80!Sepia!9, rounded corners=6ex] (current bounding box.south west)
                    rectangle (current bounding box.north east);
          \node[draw,fill=white,ellipse,anchor=west,inner sep=1pt,minimum width=4ex] at (current bounding box.north
                   west){#1};
        \end{scope}
    }},
}
\tikzset{
  mybackground15/.style={execute at end picture={
        \begin{scope}[on background layer]
          \draw[black, fill=Black!80!Sepia!9, rounded corners=3ex] (current bounding box.south west)
                    rectangle (current bounding box.north east);
          \node[draw,fill=white,ellipse,anchor=west,inner sep=1pt,minimum width=4ex] at (current bounding box.north
                   west){#1};
        \end{scope}
    }},
}
\tikzset{
  mybackground12/.style={execute at end picture={
        \begin{scope}[on background layer]
          \draw[black, fill=Black!40!Emerald!30, rounded corners=3ex, line width=0.3mm] (current bounding box.south west)
                    rectangle (current bounding box.north east);
        \end{scope}
    }},
}
\tikzset{
  mybackground18/.style={execute at end picture={
      \begin{scope}[on background layer]
        \draw[black, fill=gray3, rounded corners=3.5ex] (current bounding box.south west)
        rectangle (current bounding box.north east);
        \node[draw,fill=white,ellipse,anchor=west,inner sep=1pt,minimum width=4ex] at (current bounding box.north
        west){#1};
      \end{scope}
    }}
}
\tikzset{
  mybackground180/.style={execute at end picture={
      \begin{scope}[on background layer]
        \draw[black, rounded corners=3.5ex] (current bounding box.south west)
        rectangle (current bounding box.north east);
        \node[draw,fill=white,ellipse,anchor=west,inner sep=1pt,minimum width=4ex] at (current bounding box.north
        west){#1};
      \end{scope}
    }}
}
\tikzset{
  mybackground19/.style={execute at end picture={
      \begin{scope}[on background layer]
        \draw[black, rounded corners=3.5ex] (current bounding box.south west)
        rectangle (current bounding box.north east);
        \node[draw,fill=white,ellipse,anchor=west,inner sep=1pt,minimum width=4ex] at (current bounding box.north
        west){#1};
      \end{scope}
    }}%fill=Blue!7, 
}
\tikzset{
  mybackground58/.style={execute at end picture={
        \begin{scope}[on background layer]
          \draw[black, fill=blue!40!black!5, rounded corners=1ex] (current bounding box.south west)
                    rectangle (current bounding box.north east);
          \node[draw,fill=white,ellipse,anchor=west,inner sep=1pt,minimum width=4ex, rounded corners=1ex] at (current bounding box.north
                   west){#1};
        \end{scope}
    }},
}
\tikzset{l3 switch/.style={
    parallelepiped,fill=switch, draw=white,
    minimum width=0.75cm,
    minimum height=0.75cm,
    parallelepiped offset x=1.75mm,
    parallelepiped offset y=1.25mm,
    path picture={
      \node[fill=white,
        circle,
        minimum size=6pt,
        inner sep=0pt,
        append after command={
          \pgfextra{
            \foreach \angle in {0,45,...,360}
            \draw[-latex,fill=white] (\tikzlastnode.\angle)--++(\angle:2.25mm);
          }
        }
      ]
       at ([xshift=-0.75mm,yshift=-0.5mm]path picture bounding box.center){};
    }
  },
  ports/.style={
    line width=0.3pt,
    top color=gray!20,
    bottom color=gray!80
  },
  rack switch/.style={
    parallelepiped,fill=white, draw,
    minimum width=1.25cm,
    minimum height=0.25cm,
    parallelepiped offset x=2mm,
    parallelepiped offset y=1.25mm,
    xscale=-1,
    path picture={
      \draw[top color=gray!5,bottom color=gray!40]
      (path picture bounding box.south west) rectangle
      (path picture bounding box.north east);
      \coordinate (A-west) at ([xshift=-0.2cm]path picture bounding box.west);
      \coordinate (A-center) at ($(path picture bounding box.center)!0!(path
        picture bounding box.south)$);
      \foreach \x in {0.275,0.525,0.775}{
        \draw[ports]([yshift=-0.05cm]$(A-west)!\x!(A-center)$)
          rectangle +(0.1,0.05);
        \draw[ports]([yshift=-0.125cm]$(A-west)!\x!(A-center)$)
          rectangle +(0.1,0.05);
       }
      \coordinate (A-east) at (path picture bounding box.east);
      \foreach \x in {0.085,0.21,0.335,0.455,0.635,0.755,0.875,1}{
        \draw[ports]([yshift=-0.1125cm]$(A-east)!\x!(A-center)$)
          rectangle +(0.05,0.1);
      }
    }
  },
  server/.style={
    parallelepiped,
    fill=white, draw,
    minimum width=0.35cm,
    minimum height=0.75cm,
    parallelepiped offset x=3mm,
    parallelepiped offset y=2mm,
    xscale=-1,
    path picture={
      \draw[top color=gray!5,bottom color=gray!40]
      (path picture bounding box.south west) rectangle
      (path picture bounding box.north east);
      \coordinate (A-center) at ($(path picture bounding box.center)!0!(path
        picture bounding box.south)$);
      \coordinate (A-west) at ([xshift=-0.575cm]path picture bounding box.west);
      \draw[ports]([yshift=0.1cm]$(A-west)!0!(A-center)$)
        rectangle +(0.2,0.065);
      \draw[ports]([yshift=0.01cm]$(A-west)!0.085!(A-center)$)
        rectangle +(0.15,0.05);
      \fill[black]([yshift=-0.35cm]$(A-west)!-0.1!(A-center)$)
        rectangle +(0.235,0.0175);
      \fill[black]([yshift=-0.385cm]$(A-west)!-0.1!(A-center)$)
        rectangle +(0.235,0.0175);
      \fill[black]([yshift=-0.42cm]$(A-west)!-0.1!(A-center)$)
        rectangle +(0.235,0.0175);
    }
  },
}
\tikzset{cross/.style={cross out, draw=black, minimum size=2*(#1-\pgflinewidth), inner sep=0pt, outer sep=0pt},
%default radius will be 1pt.
cross/.default={1pt}}
\tikzset{%
  interface/.style={draw, rectangle, rounded corners, font=\LARGE\sffamily},
  ethernet/.style={interface, fill=yellow!50},% ethernet interface
  serial/.style={interface, fill=green!70},% serial interface
  speed/.style={sloped, anchor=south, font=\large\sffamily},% line speed at edge
  route/.style={draw, shape=single arrow, single arrow head extend=4mm,
    minimum height=1.7cm, minimum width=3mm, white, fill=switch!20,
    drop shadow={opacity=.8, fill=switch}, font=\tiny}% inroute/outroute arrows
}
\newcommand*{\shift}{1.3cm}% For placing the arrows later
\newcommand*{\router}[1]{
\begin{tikzpicture}
  \coordinate (ll) at (-3,0.5);
  \coordinate (lr) at (3,0.5);
  \coordinate (ul) at (-3,2);
  \coordinate (ur) at (3,2);
  \shade [shading angle=90, left color=switch, right color=white] (ll)
    arc (-180:-60:3cm and .75cm) -- +(0,1.5) arc (-60:-180:3cm and .75cm)
    -- cycle;
  \shade [shading angle=270, right color=switch, left color=white!50] (lr)
    arc (0:-60:3cm and .75cm) -- +(0,1.5) arc (-60:0:3cm and .75cm) -- cycle;
  \draw [thick] (ll) arc (-180:0:3cm and .75cm)
    -- (ur) arc (0:-180:3cm and .75cm) -- cycle;
  \draw [thick, shade, upper left=switch, lower left=switch,
    upper right=switch, lower right=white] (ul)
    arc (-180:180:3cm and .75cm);
  \node at (0,0.5){\color{blue!60!black}\Huge #1};% The name of the router
  % The four arrows, symbols for incoming and outgoing routes:
  \begin{scope}[yshift=2cm, yscale=0.28, transform shape]
    \node[route, rotate=45, xshift=\shift] {\strut};
    \node[route, rotate=-45, xshift=-\shift] {\strut};
    \node[route, rotate=-135, xshift=\shift] {\strut};
    \node[route, rotate=135, xshift=-\shift] {\strut};
  \end{scope}
\end{tikzpicture}}
  \def\tikz@shading{cloud}\tikz@addmode{\tikz@mode@shadetrue}}
\tikzset{my cloud/.style={
     cloud, draw, aspect=2,
     cloud color={gray!5!white}
  }
}
\newcommand{\setword}[2]{%
  \phantomsection
  #1\def\@currentlabel{\unexpanded{#1}}\label{#2}%
}
 \definecolor{DBrown}{HTML}{9B8879}
 \definecolor{LBrown}{HTML}{C5B99F}
 \definecolor{backg}{HTML}{BCC534}
 \definecolor{latCol}{HTML}{E8B041}
 \definecolor{pot}{HTML}{185BD9}
\tikzset{%
  wireless/.pic={
      \draw [->] (0,0) -| (.5,#1);
    \foreach \r in {.1,.2,.3}
      \draw (.6,#1) ++ (60:\r) arc (60:-60:\r);
  },
  vdots/.pic={
    \foreach \i in {-.1,0,.1}
      \fill (.25,\i) circle [radius=.75pt];
  },
  block/.style={
    shape=rectangle,
    minimum width=2cm,
    minimum height=1cm,
    draw
  },
  Tx/.style 2 args={
    block,
    node contents=Tx,
    append after command={
      \pgfextra{\pgfnodealias{@}{\tikzlastnode}}
      (@.north #1) [yshift=-.125cm] pic [#2] {wireless=.5}
      (@.#1)                        pic [#2] {vdots}
      (@.south #1) [yshift= .125cm] pic [#2] {wireless=.5}
    }
  },
  MIMO Tx east/.style={Tx={east}{xscale=1}},
  MIMO Tx west/.style={Tx={west}{xscale=-1}},
  Tx2/.style 2 args={
%    block,
    %width=0.5cm,
    node contents=,
    append after command={
      \pgfextra{\pgfnodealias{@}{\tikzlastnode}}
      (@.north #1) [yshift=-.125cm] pic [#2] {wireless=.5}
%      (@.#1)                        pic [#2] {vdots}
      (@.south #1) [yshift= .125cm] pic [#2] {wireless=.5}
    }
  },
  MIMO2 Tx east/.style={Tx2={east}{xscale=1}},
  MIMO2 Tx west/.style={Tx2={west}{xscale=-1}}
}
\newcommand{\figref}[1]{\hyperref[#1]{Fig.~\ref*{#1}}}
\newcommand{\probbref}[1]{\hyperref[#1]{Prob.~\ref*{#1}}}
\newcommand{\Probbref}[1]{\hyperref[#1]{Problem~\ref*{#1}}}
\newcommand{\figsref}[1]{\hyperref[#1]{Figs.~\ref*{#1}}}
\newcommand{\Figref}[1]{\hyperref[#1]{Figure~\ref*{#1}}}
\newcommand{\tableref}[1]{\hyperref[#1]{Table~\ref*{#1}}}
\newcommand{\appendixref}[1]{\hyperref[#1]{Appendix~\ref*{#1}}}
\newcommand{\appendicesref}[1]{\hyperref[#1]{Appendices~\ref*{#1}}}
\newcommand{\theoremref}[1]{\hyperref[#1]{Thm.~\ref*{#1}}}
\newcommand{\theoremsref}[1]{\hyperref[#1]{Thms.~\ref*{#1}}}
\newcommand{\Theoremref}[1]{\hyperref[#1]{Theorem~\ref*{#1}}}
\newcommand{\lemmaref}[1]{\hyperref[#1]{Lemma~\ref*{#1}}}
\newcommand{\propref}[1]{\hyperref[#1]{Prop.~\ref*{#1}}}
\newcommand{\propsref}[1]{\hyperref[#1]{Props.~\ref*{#1}}}
\newcommand{\Propref}[1]{\hyperref[#1]{Proposition~\ref*{#1}}}
\newcommand{\corref}[1]{\hyperref[#1]{Cor.~\ref*{#1}}}
\newcommand{\Corref}[1]{\hyperref[#1]{Corollary~\ref*{#1}}}
\newcommand{\scenarioref}[1]{\hyperref[#1]{Scenario~\ref*{#1}}}
\newcommand{\Scenarioref}[1]{\hyperref[#1]{\textsc{scenario}~\ref*{#1}}}
\newcommand{\probref}[1]{\hyperref[#1]{Prob.~\ref*{#1}}}
\newcommand{\Probref}[1]{\hyperref[#1]{Problem~\ref*{#1}}}
\newcommand{\gameref}[1]{\hyperref[#1]{Game~\ref*{#1}}}
\newcommand{\chapterref}[1]{\hyperref[#1]{Chapter~\ref*{#1}}}
\newcommand{\sectionref}[1]{\hyperref[#1]{\S\ref*{#1}}}
\newcommand{\Algref}[1]{\hyperref[#1]{Algorithm ~\ref*{#1}}}
\newcommand{\myalgref}[1]{\hyperref[#1]{Alg.~\ref*{#1}}}
\newcommand{\Myalgref}[1]{\hyperref[#1]{Algorithm~\ref*{#1}}}
\newcommand{\defref}[1]{\hyperref[#1]{Def.~\ref*{#1}}}
\newcommand{\Defref}[1]{\hyperref[#1]{Definition~\ref*{#1}}}
\newcommand{\assumptionref}[1]{\hyperref[#1]{Assumption~\ref*{#1}}}
\newcommand{\assumptionsref}[1]{\hyperref[#1]{Assumptions~\ref*{#1}}}
\newcommand{\remarkref}[1]{\hyperref[#1]{Remark~\ref*{#1}}}
\newcommand{\exampleref}[1]{\hyperref[#1]{Ex.~\ref*{#1}}}
\newtcolorbox{promptone}{
  colback=black!5!white,
  colframe=black!30!black!70,
  title=Instruction for generating a complete recovery plan,
  fonttitle=\bfseries,
  sharp corners
}
\newtcolorbox{prompttwo}{
  colback=black!5!white,
  colframe=black!30!black!70,
  title=Attack classification instruction,
  fonttitle=\bfseries,
  sharp corners
}
\newtcolorbox{promptthree}{
  colback=black!5!white,
  colframe=black!30!black!70,
  title=Action generation instruction,
  fonttitle=\bfseries,
  sharp corners
}
\newtcolorbox{promptfour}{
  colback=black!5!white,
  colframe=black!30!black!70,
  title=State prediction instruction,
  fonttitle=\bfseries,
  sharp corners
}
\newtcolorbox{questionone}{
  colback=black!5!white,
  colframe=black!30!black!70,
  title=The CPS Recovery Problem,
  fonttitle=\bfseries,
  sharp corners
}
\newtcolorbox{motivatingexample}{
  float,
  colback=black!5!white,
  colframe=black!30!black!70,
  title=Example: Retrieval-augmented generation (\textsc{rag}),
  fonttitle=\bfseries,
  sharp corners
}
\newtcolorbox{pomdpexample}{
  colback=black!5!white,
  colframe=black!30!black!70,
  title=Example: Recovery from a network intrusion,
  fonttitle=\bfseries,
  sharp corners
}
\newtcolorbox{motivatingexamplethree}{
  colback=black!5!white,
  colframe=black!30!black!70,
  title=Example: In-context learning,
  fonttitle=\bfseries,
  sharp corners
}
\newtcolorbox{problemtwo}{
  colback=black!5!white,
  colframe=black!30!black!70,
  title=Problem $2$ (event-based attestation strategy),
  fonttitle=\bfseries,
  sharp corners
}
\tikzset{
    database/.style={
        path picture={
            \draw (0, 1.5*\database@segmentheight) circle [x radius=\database@radius,y radius=\database@aspectratio*\database@radius];
            \draw (-\database@radius, 0.5*\database@segmentheight) arc [start angle=180,end angle=360,x radius=\database@radius, y radius=\database@aspectratio*\database@radius];
            \draw (-\database@radius,-0.5*\database@segmentheight) arc [start angle=180,end angle=360,x radius=\database@radius, y radius=\database@aspectratio*\database@radius];
            \draw (-\database@radius,1.5*\database@segmentheight) -- ++(0,-3*\database@segmentheight) arc [start angle=180,end angle=360,x radius=\database@radius, y radius=\database@aspectratio*\database@radius] -- ++(0,3*\database@segmentheight);
        },
        minimum width=2*\database@radius + \pgflinewidth,
        minimum height=3*\database@segmentheight + 2*\database@aspectratio*\database@radius + \pgflinewidth,
    },
    database segment height/.store in=\database@segmentheight,
    database radius/.store in=\database@radius,
    database aspect ratio/.store in=\database@aspectratio,
    database segment height=0.1cm,
    database radius=0.25cm,
    database aspect ratio=0.35,
  }
\colorlet{myred}{red!80!black}
\colorlet{myblue}{blue!80!black}
\colorlet{mygreen}{green!60!black}
\colorlet{myorange}{orange!70!red!60!black}
\colorlet{mydarkred}{red!30!black}
\colorlet{mydarkblue}{blue!40!black}
\colorlet{mydarkgreen}{green!30!black}
\tikzset{
  >=latex, % for default LaTeX arrow head
  node/.style={thick,circle,draw=myblue,minimum size=22,inner sep=0.5,outer sep=0.6},
  node in/.style={node,black!20!black,draw=mygreen!30!black,fill=black!20},
  node hidden/.style={node,black!20!black,draw=myblue!30!black,fill=black!20},
  node convol/.style={node,black!20!black,draw=myorange!30!black,fill=black!20},
  node out/.style={node,red!20!black,draw=myred!30!black,fill=black!20},
  connect/.style={thick,Blue!100}, %,line cap=round
  connect arrow/.style={-{Latex[length=4,width=3.5]},thick,mydarkblue,shorten <=0.5,shorten >=1},
  node 1/.style={node in}, % node styles, numbered for easy mapping with \nstyle
  node 2/.style={node hidden},
  node 3/.style={node out}
}
\def\nstyle{int(\lay<\Nnodlen?min(2,\lay):3)} % map layer number onto 1, 2, or 3
\newcommand{\cmark}{\textcolor{OliveGreen}{\ding{51}}} 
\newcommand{\xmark}{\textcolor{Red}{\ding{55}}}  
\newcommand{\qmark}{\textcolor{Blue}{\textbf{?}}}
\definecolor{highlightA}{RGB}{255, 192, 192} % red-ish
\definecolor{highlightB}{RGB}{192, 255, 192} % green-ish
\definecolor{highlightC}{RGB}{255, 240, 150} % yellow
\newtcolorbox{responsebox}[2][]{
  enhanced,
  sharp corners,
  colback=white,
  colframe=black,
  boxrule=0.5pt,
  width=\textwidth,
  title=#2,
  fonttitle=\bfseries,
  coltitle=black,
  attach boxed title to top left={yshift=-2mm, xshift=4mm},
  boxed title style={colback=white},
  #1
}
\definecolor{bluetwo}{RGB}{189, 213, 234}
\definecolor{bluethree}{RGB}{165, 193, 224}
\newcommand{\acro}[1]{\textsc{#1}\xspace}
\newcommand{\ssh}{\acro{ssh}}
\newcommand{\irc}{\acro{irc}}
\newcommand{\smtp}{\acro{smtp}}
\newcommand{\mysql}{\acro{mysql}}
\newcommand{\tcpp}{\acro{tcp}}
\newcommand{\xmas}{\acro{xmas}}
\newcommand{\udp}{\acro{udp}}
\newcommand{\syn}{\acro{syn}}
\newcommand{\mongo}{\acro{mongodb}}
\newcommand{\postgres}{\acro{postgres}}
\newcommand{\telnet}{\acro{telnet}}
\newcommand{\cassandra}{\acro{cassandra}}
\newcommand{\vulscan}{\acro{vulscan}}
\newcommand{\ftp}{\acro{ftp}}
\newcommand{\cve}{\acro{cve}}
\newcommand{\cwe}{\acro{cwe}}
\begin{document}
\bstctlcite{MyBSTcontrol}

\title{Incident Response Planning Using a Lightweight Large Language Model with Reduced Hallucination}

\author{\IEEEauthorblockN{Kim Hammar\IEEEauthorrefmark{2}, Tansu Alpcan\IEEEauthorrefmark{2}, and Emil C. Lupu\IEEEauthorrefmark{3}}\\
 \IEEEauthorblockA{\IEEEauthorrefmark{2}
   Department of Electrical and Electronic Engineering, University of Melbourne, Australia\\
 }
 \IEEEauthorblockA{\IEEEauthorrefmark{3}
   Department of Computing, Imperial College London, United Kingdom\\
 } 
 Email: \{kim.hammar,tansu.alpcan\}@unimelb.edu.au, and e.c.lupu@imperial.ac.uk\\
%\today
}
\maketitle
\begin{abstract}  
Timely and effective incident response is key to managing the growing frequency of cyberattacks. However, identifying the right response actions for complex systems is a major technical challenge. A promising approach to mitigate this challenge is to use the security knowledge embedded in large language models (\textsc{llm}s) to assist security operators during incident handling. Recent research has demonstrated the potential of this approach, but current methods are mainly based on prompt engineering of frontier \textsc{llm}s, which is costly and prone to hallucinations. We address these limitations by presenting a novel way to use an \textsc{llm} for incident response planning with reduced hallucination. Our method includes three steps: fine-tuning, information retrieval, and lookahead planning. We prove that our method generates response plans with a bounded probability of hallucination and that this probability can be made arbitrarily small at the expense of increased planning time under certain assumptions. Moreover, we show that our method is lightweight and can run on commodity hardware. We evaluate our method on logs from incidents reported in the literature. The experimental results show that our method a) achieves up to $22$\% shorter recovery times than frontier \textsc{llm}s and b) generalizes to a broad range of incident types and response actions.
\end{abstract}
\section{Introduction}
\lettrine[lines=2]{\textbf{I}}{ncident} response refers to the coordinated actions taken to contain, mitigate, and recover from cyberattacks. Today, incident response is largely a manual process carried out by security operators \cite{287145}. While this approach can be effective, it is often slow, labor-intensive, and requires significant skills. For example, a recent study reports that organizations take an average of 73 days to respond and recover from an incident \cite{ibm2024costofdatabreach}. Reducing this delay requires better decision-support tools to assist operators during incident handling. Currently, the standard approach to assisting operators relies on \textit{response playbooks} \cite{10.1145/3491102.3517559}, which comprise predefined rules for handling specific incidents. However, playbooks still rely on security experts for configuration and are therefore difficult to keep aligned with evolving threats and system architectures \cite{10646756}.

%\tikzexternaldisable
\begin{figure}
  \centering
%\tikzsetnextfilename{intro}  
  \scalebox{1.69}{
   \input{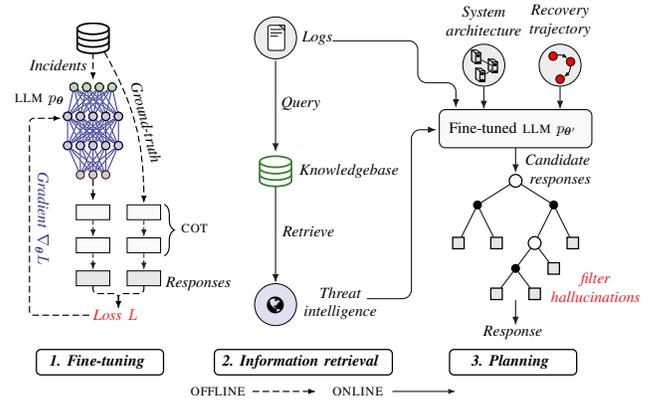}    
  }
  \caption{The three steps of our method for incident response planning: 1. fine-tuning of a (lightweight) large language model (\textsc{llm}); 2. retrieval of relevant threat intelligence; and 3. decision-theoretic planning and chain-of-thought (\textsc{cot} \cite{10.5555/3600270.3602070}) reasoning to select effective responses and filter hallucinations.}
  \label{fig:framework}
\end{figure}
%\tikzexternalenable

To overcome these limitations, an emerging direction of research is to leverage the security knowledge encoded in large language models (\textsc{llm}s) to generate effective response actions \cite{castro2025largelanguagemodelsautonomous, rigaki2023cage, 10991969,yan2024dependingshouldmentoringllm,hays2024employingllmsincidentresponse,lin2025ircopilotautomatedincidentresponse,10540988}. These actions can then be used as suggestions to security operators. Although this approach remains largely confined to academic settings for now, it is beginning to see commercial adoption, as exemplified by \textsc{ibm}’s recent launch of an \textsc{llm}-based response service \cite{hussey2025instana}. Most of the \textsc{llm}-based methods proposed in the literature so far are based on prompt engineering of frontier \textsc{llm}s, such as \textsc{openai o3} \cite{openai2024gpt4technicalreport}. While this approach has shown promise, it is costly and relies on an external \textsc{llm} provider (e.g., \textsc{google} or \textsc{openai}), which limits flexibility. Another important concern with this approach is that frontier \textsc{llm}s are not specialized for incident response, which makes them particularly prone to \textit{hallucinations} \cite{NEURIPS2024_3c1e1fdf}, i.e., they may generate response actions that appear plausible but are incorrect or unrelated to the incident.

In this paper, we present a novel method that addresses these limitations and provides a principled way to use an \textsc{llm} as decision support for incident response; see \figref{fig:framework}. Our method includes three main steps: (\textit{i}) instruction fine-tuning of a lightweight \textsc{llm} to align it with the phases and objectives of incident response; (\textit{ii}) retrieval-augmented generation (\textsc{rag}) to ground the \textsc{llm} in current threat information and system knowledge; and (\textit{iii}) decision-theoretic planning and chain-of-thought (\textsc{cot}) reasoning to generate effective response actions.

We evaluate our method based on log data from incidents reported in the literature. The results show that our method surpasses the performance of frontier \textsc{llm}s (e.g., \textsc{gemini 2.5} \cite{comanici2025gemini25pushingfrontier, geminiteam2024geminifamilyhighlycapable}) by up to $22$\% while being far less resource-intensive. Moreover, we show that our method performs comparably to the \textsc{ppo} reinforcement learning method \cite{ppo}, despite not relying on incident-specific training like \textsc{ppo} does. We also present an ablation study assessing the contribution of the individual steps of our method. We show that all steps contribute to its performance, with fine-tuning and planning having the greatest impact. In addition to the empirical results, we present a theoretical analysis that establishes a probabilistic upper bound on the hallucination probability of our method.

Our contributions can be summarized as follows:
\begin{itemize}
 \item We develop a novel method for incident response that integrates a lightweight \textsc{llm} with instruction fine-tuning, information retrieval, and decision-theoretic planning.
 \item We derive a probabilistic upper bound on the hallucination probability of our method. Under certain assumptions, this bound can be made arbitrarily small at the expense of increased planning time.
 \item We evaluate our method on logs from incidents reported in the literature. The results show that our method a) achieves up to $22$\% shorter recovery times than frontier \textsc{llm}s; b) generalizes to a broad range of incidents and responses; and c) performs comparably to a reinforcement learning method that is pretrained for each incident.
 \item We release the first \textsc{llm} fine-tuned for incident response, together with a dataset of $68,000$ incidents and the corresponding responses. We also provide source code and a video demonstration of a decision-support system for incident response that implements our method \cite{llm_source_kim}.
 \end{itemize}

\section{Related Work}\label{sec:related_work}
Since the early 2000s, there has been broad interest in developing systems that can assist security operators during incident response \cite{10.1504/IJICS.2007.012248,tansu_response_2003}. Traditional decision-support systems are based on playbooks that map incident scenarios to sequences of response actions \cite{10.1504/IJICS.2007.012248,playbook_response}, such as those provided by \textsc{splunk} \cite{splunk_playbook}, \textsc{cisa} \cite{cisa_playbook}, and \textsc{oasis} \cite{oasis_playbook}. Although playbooks can be effective, they rely on security experts for configuration. As a consequence, they are difficult to keep up-to-date with evolving security threats and system architectures \cite{10646756}. Another common critique of playbooks is that they consist of generic response actions that are difficult for non-experts to interpret and execute effectively \cite{10.1145/3491102.3517559}. Several research efforts have aimed to address these limitations by \textit{automating} the generation of effective incident response strategies and functions. Four predominant approaches to such automation have emerged: decision-theoretic \cite{dsn24_hammar_stadler}, reinforcement learning \cite{singh2024hierarchicalmultiagentreinforcementlearning}, game-theoretic \cite{10955193,5270307,nework_security_alpcan}, and \textsc{llm}-based approaches \cite{castro2025largelanguagemodelsautonomous}.
 
The first three approaches share a common requirement: they need a perfect simulator (model) that captures how the system evolves in response to attacks and defensive actions. The simulator enables the computation of optimal response strategies (according to the model) through numerical optimization techniques. For example, a standard benchmark in this line of research is \textsc{cage-2} \cite{cage_challenge_2_announcement}, which simulates an advanced persistent threat on an enterprise network. State-of-the-art methods evaluated on this benchmark include dynamic programming \cite{tifs_25_HLALB}, reinforcement learning \cite{vyas2023automated}, and tree search \cite{hammar2024optimaldefenderstrategiescage2}, all of which rely on a simulator. While these approaches can be effective when high-fidelity simulators are available, such simulators are rarely available in practice. Furthermore, the resulting response strategies are limited in scope as they are trained on a narrow set of attack vectors and response options. For instance, the \textsc{cage-2} simulation is limited to around $20$ attacker actions and defensive countermeasures \cite{hammar2024optimaldefenderstrategiescage2}.

A promising approach to address this drawback is to use large language models (\textsc{llm}s) to automatically generate effective response actions based on system logs. This approach is not limited to a predefined set of actions and eliminates the need for a simulator. Early studies in this direction include \cite{castro2025largelanguagemodelsautonomous, rigaki2023cage, 10991969,yan2024dependingshouldmentoringllm,hays2024employingllmsincidentresponse,lin2025ircopilotautomatedincidentresponse,10540988}, and \cite{hussey2025instana}. Notably, the work in \cite{hussey2025instana} is a commercial product by \textsc{ibm}. While these works report encouraging results, they have three key limitations: they do not provide a theoretical analysis, they do not address the risk of hallucinations, and most of them require \textsc{api} access to frontier \textsc{llm}s.

Our method differs from prior work in several ways. It does not rely on a simulator or a manually-designed playbook, is lightweight enough to run on commodity hardware, has reduced hallucination, is accompanied by a theoretical analysis, and combines fine-tuning with retrieval-augmented generation (\textsc{rag}); see \tableref{tab:related_work}. Moreover, ours is the only \textsc{llm}-based method that is fully open-source (code, weights, and data).

 \begin{table}[H]
  \centering
  \scalebox{0.58}{
    \begin{tabular}{llllllll} \toprule
\rowcolor{lightgray}
      {\textit{Method}} & {\textit{Theory}} & {\textit{\textsc{rag}}} & {\textit{Fine-tuning}} & {\textit{Lightweight}} & {\textsc{llm}} & {\textit{Req. simulator}} & {\textit{Manual}}\\ \midrule
    \rowcolor{lightgreen}
      \textsc{ours} (\figref{fig:framework}) & \cmark & \cmark & \cmark & \cmark & \cmark & \xmark & \xmark\\
    \rowcolor{lightblue}      
      \cite{castro2025largelanguagemodelsautonomous},\cite{rigaki2023cage}--\cite{lin2025ircopilotautomatedincidentresponse} & \xmark & \xmark & \xmark & \xmark & \cmark & \xmark & \xmark\\
    \rowcolor{lightblue}      
      \cite{hussey2025instana} & \xmark & \qmark & \qmark & \qmark & \cmark & \xmark & \xmark\\
    \rowcolor{lightblue}      
      \cite{10540988} & \xmark & \cmark & \xmark & \xmark & \cmark & \xmark & \xmark\\
    \rowcolor{lightred}            
      \cite{tifs_25_HLALB,hammar2024optimaldefenderstrategiescage2,li2024conjectural} & \cmark & \xmark & \xmark & \cmark & \xmark & \cmark & \xmark\\
    \rowcolor{lightred}                  
      \cite{vyas2023automated},\cite{tabular_Q_andy,singh2024hierarchicalmultiagentreinforcementlearning,ramamurthy2025generalautonomouscybersecuritydefense,huang2025intentbasedontologydrivenautonomicsecurity} & \xmark & \xmark & \xmark & \cmark & \xmark & \cmark & \xmark\\      
%    \rowcolor{lightorange}                  
      \cite{playbook_response,10.1145/3538969.3538976,10.1145/3688810} & \xmark & \xmark & \xmark & \cmark & \xmark & \xmark & \cmark\\
    \bottomrule\\
  \end{tabular}}
\caption{Comparison between our method and related approaches, which can be grouped into three categories: those relying on playbooks (white row), those relying on a simulator for numerical optimization (red rows), and those using \textsc{llm}s (blue rows). Compared to other \textsc{llm}-based approaches, our method (green row) is the only method that does not depend on frontier \textsc{llm}s, is lightweight enough to run on commodity hardware, has reduced hallucination probability, and is accompanied by a theoretical analysis.}\label{tab:related_work}
\end{table}

Lastly, we note that a growing body of research applies \textsc{llm}s to security use cases other than incident response, such as penetration testing \cite{pentest_gpt,rodriguez2025frameworkevaluatingemergingcyberattack}, security assistants \cite{DBLP:conf/ndss/DengLCBWLW025}, scanning \cite{DBLP:conf/ndss/StafeevRSKP25,299549}, threat hunting \cite{google_llm_recovery}, verification \cite{DBLP:conf/ndss/0012XW00S025}, piracy \cite{DBLP:conf/ndss/GohilDNSR25}, detection \cite{DBLP:conf/ndss/YangL0L25}, fuzzing \cite{10.1145/3597503.3639121,299896}, \textsc{api} design \cite{DBLP:conf/ndss/LiuY0L25}, network operations \cite{net_llm}, threat intelligence \cite{ARAZZI2025100765}, and decompilation \cite{DBLP:conf/ndss/HuL024}. Compared to these works, the main novelty of our method lies in its approach to reducing hallucinations.
%and in the accompanying theoretical analysis.

%\tikzexternaldisable
\begin{figure*}
  \centering
%\tikzsetnextfilename{deployment2}    
  \scalebox{1.57}{
   \input{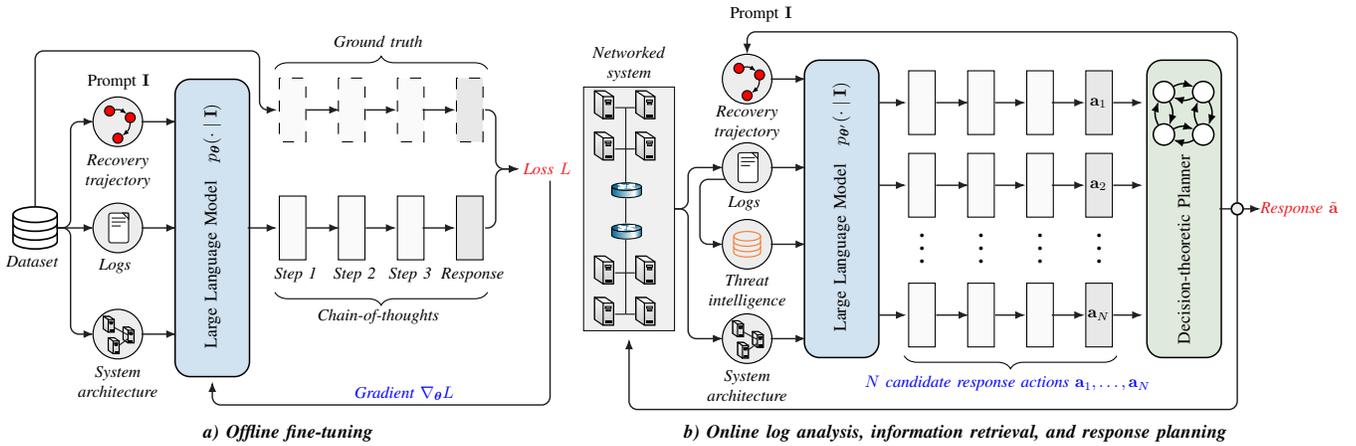}    
  }
  \caption{The two phases of our method. In the first phase [cf.~\textbf{a})], an \textsc{llm} is fine-tuned offline on a dataset of logs from 68,000 incidents paired with response plans and chain-of-thought reasoning steps \cite{10.5555/3600270.3602070}. In the second phase [cf.~\textbf{b})], system logs and threat intelligence are processed online by the fine-tuned \textsc{llm} and used to generate $N$ candidate responses. These responses are then evaluated via a planning procedure, which selects the most effective response.}
  \label{fig:deployment}
\end{figure*}
%\tikzexternalenable
\section{The Incident Response Problem}
Incident response involves selecting a sequence of actions that restores a networked system to a secure and operational state after a cyberattack. These actions should analyze the scope of the attack, secure forensic evidence, contain and evict the attacker, harden the system to prevent recurrence, and restore critical services. Examples of response actions include redirecting network flows, updating access control policies, patching vulnerabilities, shutting down compromised systems, and restarting operational services. From a security engineering perspective \cite{ross_security}, incident response fits within the broader cyber resilience framework by operationalizing the response and recovery phase after a cyberattack \cite{Ganin2016}.

\Figref{fig:resilience} illustrates the phases of incident response. Following the attack is a \textit{response time} interval, which represents the delay between the attack and the first response. This phase is followed by a \textit{recovery time} interval, during which response actions are deployed. When selecting these actions, the goal is to restore the system to a secure and operational state as quickly as possible while minimizing operational costs. A key challenge to achieving this goal is that the information about the attack is often limited to partial indicators of compromise (e.g., log files and alerts), while the full scope and severity of the attack are unknown \cite{273869}. Another major difficulty is that even short delays in initiating the response can lead to significant costs. For example, in the event of a ransomware attack, a delay of just a few minutes may allow the malware to encrypt systems or spread laterally across the network \cite{wannacry_nhgs}.
%\tikzexternaldisable
\begin{figure}[H]
  \centering
%\tikzsetnextfilename{resilience}    
  \scalebox{0.85}{
   \begin{tikzpicture}

\node[scale=0.8] (kth_cr) at (0,2.15)
{
  \begin{tikzpicture}

\draw[-{Latex[width=2mm]}, black, thick, line width=0.4mm] (0,0) to (11,0);
\draw[-{Latex[width=2mm]}, black, thick, line width=0.4mm] (0,0) to (0,3.2);

\draw[-, black, thick, line width=0.4mm, dashed, name path=normal] (0,2.5) to (11,2.5);
\draw[-, black, thick, line width=0.4mm, name path=recovery1] (0,2.48) to (3.5,2.48) to (3.5, 1.1) to (5, 1.1);

\draw[-, black, thick, line width=0.4mm, name path=recovery5] (3.5, 2.5) to (3.5, 1.1) to (6.5, 1.1) to[bend left=15] (10.5, 2.5) to (11,2.5);
%\draw[-, black, thick, line width=0.4mm, bend left=20, name path=recovery2] (5, 2.1) to (14, 3);

\draw[-, black, thick, line width=0.4mm, dotted] (3.5,2.95) to (3.5,-0.5);
\draw[-, black, thick, line width=0.4mm, dotted] (10.5,2.95) to (10.5,-0.5);
\draw (3.5,0) node[cross,rotate=0, scale=6] {};
%\draw (12.5,0) node[cross,rotate=0, scale=6] {};

\path[
pattern=north west lines, pattern color=Red,
        intersection segments={
                of=normal and recovery5,
                sequence={R2--L2}
              }];

\draw[-{Latex[width=1.2mm]}, black, thick, line width=0.3mm, bend right=20] (8,2.8) to (5,1.9);

\node[inner sep=0pt,align=center, scale=1, rotate=0, opacity=1] (obs) at (9.3,2.8)
{
Operational cost
};

\draw[{Latex[width=1.2mm]}-{Latex[width=1.2mm]}, black, thick, line width=0.3mm] (3.3,2.4) to (3.3,1.1);
%\draw[{Latex[width=1.2mm]}-{Latex[width=1.2mm]}, black, thick, line width=0.3mm] (3.3,0.1) to (3.3,1.1);
\draw[{Latex[width=1.2mm]}-{Latex[width=1.2mm]}, black, thick, line width=0.4mm] (0,1.1) to (3.3,1.1);
\draw[{Latex[width=1.2mm]}-{Latex[width=1.2mm]}, black, thick, line width=0.4mm] (6.6,1.1) to (10.5,1.1);
\draw[{Latex[width=1.2mm]}-{Latex[width=1.2mm]}, black, thick, line width=0.4mm] (3.5,0.9) to (6.4,0.9);

\draw[{Latex[width=1.2mm]}-{Latex[width=1.2mm]}, black, thick, line width=0.3mm] (3.3,0.1) to (3.3,1);

\node[inner sep=0pt,align=center, scale=1, rotate=0, opacity=1] (obs) at (4.7,-0.35)
{
  Attack event
};
%
%\node[inner sep=0pt,align=center, scale=1, rotate=0, opacity=1] (obs) at (13.72,-0.35)
%{
%  Full recovery
%};
%
\node[inner sep=0pt,align=center, scale=1, rotate=0, opacity=1] (obs) at (11.45,0)
{
  Time
};
\node[inner sep=0pt,align=center, scale=1, rotate=0, opacity=1] (obs) at (8.65,0.8)
{
  Recovery time $T$
};
\node[inner sep=0pt,align=center, scale=1, rotate=0, opacity=1] (obs) at (4.95,0.57)
{
  Response time
};
\node[inner sep=0pt,align=center, scale=1, rotate=0, opacity=1] (obs) at (1.7,1.35)
{
  Survivability
};
\node[inner sep=0pt,align=center, scale=1, rotate=0, opacity=1] (obs) at (2.45,0.58)
{
  Tolerance
};
\node[inner sep=0pt,align=center, scale=1, rotate=90, opacity=1] (obs) at (3,1.8)
{
  Loss
};
%
%\node[inner sep=0pt,align=center, scale=1, rotate=0, opacity=1] (obs) at (13.8,2.75)
%{
%  Normal\\
%  service/\\
%  security
%};
%
\node[inner sep=0pt,align=center, scale=1, rotate=90, opacity=1] (obs) at (-0.25,1.6)
{
  Service/security
};
%
%\node[inner sep=0pt,align=center, scale=1.3, rotate=0, opacity=1] (obs) at (2,1.1)
%{
%  Tolerance
%};
%

  \end{tikzpicture}
};

\end{tikzpicture}        
  }
  \caption{Phases and performance metrics of the incident response problem.}
  \label{fig:resilience}
\end{figure}
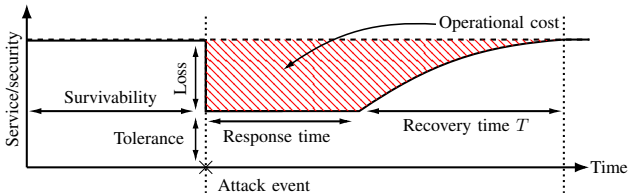
%\tikzexternalenable
\section{Our Method for Incident Response Planning}
Motivated by the challenges described above, we develop a method for using an \textsc{llm} as decision support during incident handling, i.e., to help security operators identify and execute effective response actions quickly. Broadly speaking, our method takes as input a description of an incident (e.g., system logs, security alerts, and threat intelligence) and produces as output a sequence of recommended response actions. The main challenge in generating such recommendations is to ensure that the response actions are effective despite the possibility that the \textsc{llm} hallucinates. In the following subsections, we describe the steps we take to address this challenge.

\subsection{Overview of Our Approach and System Architecture}\label{sec:overview}
Our method consists of three main steps: (\textit{i}) supervised fine-tuning of a lightweight \textsc{llm} to align it with the objectives of incident response; (\textit{ii}) retrieval-augmented generation (\textsc{rag}) to ground the \textsc{llm} in current threat information and system knowledge; and (\textit{iii}) decision-theoretic planning to synthesize effective response actions. These steps can be divided into two phases: an offline phase for fine-tuning and an online phase for information retrieval and response generation; see \figref{fig:deployment}.

The first step of our method is to fine-tune a lightweight \textsc{llm} for incident response. We conduct this fine-tuning by training the \textsc{llm} on a labeled dataset of incident logs paired with corresponding response actions and reasoning steps. This training enables the \textsc{llm} to learn typical patterns of incident handling. For example, it learns the logical dependencies between different phases of the response process, such as containment and eviction. Another benefit of fine-tuning is that it can reduce hallucinations; see e.g., \cite{tonmoy2024comprehensivesurveyhallucinationmitigation}.
\begin{remark}
We call an \textsc{llm} \emph{lightweight} if it has significantly fewer parameters than a typical frontier \textsc{llm}. For the experimental results reported in this paper, we use the \textsc{deepseek-r1-14b} \textsc{llm}, which has $14$ billion parameters. This parameter count is small in comparison with that of the frontier \textsc{llm} \textsc{deepseek-r1}, which has $671$ billion parameters \cite{deepseekai2025deepseekr1incentivizingreasoningcapability}.
\end{remark}

Once fine-tuned, the \textsc{llm} can provide decision support for incident response by generating a sequence of recommended response actions when prompted with details about an incident. However, because the \textsc{llm} is trained on historical incident data, it cannot generate response actions that relate to newly discovered vulnerabilities or attack techniques. To address this limitation, we augment the system logs with additional threat information retrieved online. Specifically, we automatically extract \textit{indicators of compromise} from the logs (e.g., hostnames and vulnerability identifiers) and use them to retrieve relevant information from external sources, such as threat intelligence \textsc{api}s and vulnerability databases. We then append this information to the logs before prompting the \textsc{llm}. In addition to improving the quality of the response, several empirical studies have shown that such \textit{retrieval-augmented generation} also reduces the probability of hallucinations \cite{ayala-bechard-2024-reducing}.

Lastly, instead of directly selecting the response action generated by the fine-tuned \textsc{llm}, we use the \textsc{llm} to generate several candidate actions and select the one that is least likely to be hallucinated. In particular, we evaluate each candidate action by using the \textsc{llm} to simulate possible outcomes of the action, after which we select the action that leads to the shortest expected recovery time. This lookahead planning enforces a form of \textit{self-consistency} \cite{DBLP:conf/iclr/0002WSLCNCZ23}, where actions are validated against the \textsc{llm}’s predicted outcomes. Such validation has been shown in prior work to reduce hallucinations; see e.g., \cite{chen2023universalselfconsistencylargelanguage}, \cite{weng-etal-2023-large}, and \cite{10.5555/3666122.3668141}. We provide a theoretical justification for why this procedure can reduce hallucination in \sectionref{sec:hall_theory}.

Each of these three steps (fine-tuning, information retrieval, and planning) is detailed below, starting with fine-tuning.
\subsection{Instruction Fine-Tuning}\label{sec:fine_tuning}
Our goal with fine-tuning is to make the pre-trained \textsc{llm} generate appropriate response actions when prompted with system logs describing an incident. In this context, we view the pre-trained \textsc{llm} as a probabilistic model that takes as input a sequence of tokens $\mathbf{x}=\mathbf{x}_1, \mathbf{x}_2, \dots, \mathbf{x}_n$ and predicts the probability distribution over the subsequent token as
\begin{equation}
p_{\bm{\theta}}(\mathbf{x}_{n+1} \mid \mathbf{x}_1, \mathbf{x}_2, \hdots, \mathbf{x}_n), \label{eq:next_token_pred}
\end{equation}
where $\bm{\theta}$ denotes the model parameters.

The next-token prediction in (\ref{eq:next_token_pred}) allows us to generate response actions as follows. We start by concatenating a description of the incident (e.g., system logs) with an instruction to generate a response action. We then pass the resulting text through a tokenizer that converts it into a sequence of tokens $\mathbf{x}=\mathbf{x}_1, \hdots, \mathbf{x}_n$. Next, we feed these tokens into the \textsc{llm} to generate the next token by sampling from (\ref{eq:next_token_pred}). Subsequently, we append the generated token to the prompt and feed the entire sequence back into the \textsc{llm} to predict the next token. We repeat this process autoregressively until the \textsc{llm} generates a special end-of-sequence token, which is produced when the \textsc{llm} determines that the response action is complete, i.e., when the action has been fully specified.

\begin{remark}
We place no restrictions on the form of a response action. It may be a single command, a compound procedure, or any other textual description, depending on the incident.
\end{remark}  

To steer the \textsc{llm} toward generating effective responses, we fine-tune it using supervised learning on a dataset of $68,000$ instruction-answer pairs $\mathcal{D} = {(\mathbf{x}^i, \mathbf{y}^i)}_{i=1}^K$, where each instruction $\mathbf{x}^i$ consists of information related to an incident and a task for the \textsc{llm} to perform. The associated answer $\mathbf{y}^i$ describes the correct steps to complete the task, paired with a sequence of chain-of-thought (\textsc{cot} \cite{10.5555/3600270.3602070}) reasoning steps that explain the answer. We use two types of instructions: \textit{action-generation} instructions and \textit{state-prediction} instructions.

In the first case, the vector $\mathbf{x}^i$ represents an instruction to generate a response to an incident. In the latter case, $\mathbf{x}^i$ represents an instruction to assess the current status of the incident response process. For example, the instruction may be to determine whether the attack has been contained, whether the system has been hardened to prevent recurrence, or whether forensic evidence has been secured. See \appendixref{app:example_dataset} for the prompt templates and details on how we construct the dataset.
\begin{remark}
We do not fine-tune the \textsc{llm} for a specific incident scenario. Rather, the training dataset spans a diverse set of incidents, log formats, and response types, enabling the \textsc{llm} to generalize across a wide range of incident scenarios.
\end{remark}
Given the training dataset $\mathcal{D}$, we fine-tune the \textsc{llm} by iteratively sampling a batch of instruction-answer pairs $(\mathbf{x}^{1}, \mathbf{y}^{1}), \hdots, (\mathbf{x}^{M}, \mathbf{y}^{M})$ and updating its parameters via gradient descent based on the cross-entropy loss
\begin{align}
L = -\frac{1}{M}\sum_{i=1}^M\sum_{k=1}^{m_i}\ln p_{\bm{\theta}}\left(\mathbf{y}^{i}_k \mid \mathbf{x}^{i}, \mathbf{y}^{i}_{1},\hdots, \mathbf{y}^{i}_{k-1} \right),\label{cross_entropy_loss}
\end{align}
where $m_i$ is the length of the vector $\mathbf{y}^{i}$. We denote the fine-tuned parameter vector by $\bm{\theta}^{\prime}$ to distinguish it from $\bm{\theta}$.

\Figref{fig:fine_tune} displays the training loss curves when fine-tuning the \textsc{deepseek-r1-14b} \textsc{llm} \cite{deepseekai2025deepseekr1incentivizingreasoningcapability}. We run the experiment on $4\times$\textsc{rtx 8000} \textsc{gpu}s and compare a higher learning rate (blue) with a lower one (red). We observe that the higher learning rate results in convergence to a lower loss. Additional experimental details and hyperparameters can be found in \appendixref{appendix:hyperparameters}.

%\tikzexternaldisable
\begin{figure}[H]
  \centering
%\tikzsetnextfilename{fine_tune_loss3}   
  \scalebox{0.81}{
   \input{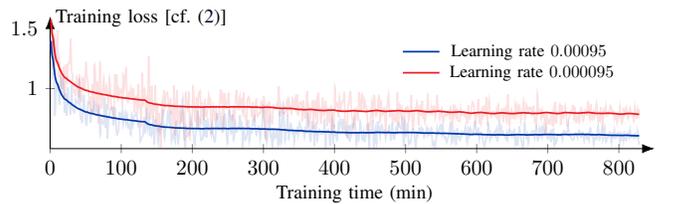}     
  }
  \caption{Loss curves when fine-tuning the \textsc{deepseek-r1-14b} \cite{deepseekai2025deepseekr1incentivizingreasoningcapability} \textsc{llm} under two different learning rates. The solid lines indicate the mean loss and the shaded lines represent the loss on specific batches of training examples; cf.~(\ref{cross_entropy_loss}).}
  \label{fig:fine_tune}
\end{figure}
%\tikzexternalenable
\subsection{Retrieval-Augmented Response Generation (\textsc{rag})}\label{section:rag}
While the fine-tuned \textsc{llm} can generate effective response actions, its outputs depend on the distribution of incidents seen during training. This presents a limitation as the \textsc{llm} is trained on historical data that may not reflect the most recent threat landscape. To address this challenge, we use indicators of compromise (e.g., vulnerability identifiers or hostnames) in the system logs to retrieve relevant threat intelligence from external sources. By incorporating such information at the time of action generation, the \textsc{llm} can adapt its responses to reflect up-to-date threat information and system knowledge \cite{lewis2020retrieval}.

As an example, consider a scenario where the \textsc{llm} is trained on data available only up to 2020. Suppose that the \textsc{llm} is prompted with information about an incident that relates to a vulnerability discovered after $2020$, e.g., \textsc{cve-2021-44228} \cite{cve}. In this case, the \textsc{llm} may not have sufficient information to generate effective response actions, as illustrated below.
\begin{itemize}
\item \textsc{without \textsc{rag}}. Prompted only with the logs, the \textsc{llm} generates the action: ``\textit{isolate host}'' as it has no knowledge about the nature of the vulnerability \textsc{cve-2021-44228}.
\item \textsc{with \textsc{rag}}. The system retrieves information about specific mitigations for \textsc{cve-2021-44228}. When provided with this information, the \textsc{llm} generates a response action with targeted mitigations for \textsc{cve-2021-44228}, thereby reducing the time to recover from the incident.
\end{itemize}  

\subsection{Incident Response Planning}\label{sec:decision_planning}
Having fine-tuned the \textsc{llm} to produce response actions from incident logs, we now address the challenge of selecting the most effective action. Although the \textsc{llm} can produce effective actions in many cases, it may also hallucinate and generate ineffective actions. To reduce the risk of such hallucinations, our method includes a planning procedure where we use the \textsc{llm} to generate multiple candidate actions and then select the action least likely to be hallucinated, as described below.

\vspace{2mm}
\noindent\textit{\textbf{System model.}}
We formulate incident response planning as a stochastic shortest path problem. In this formulation, the response process evolves over a sequence of \textit{time steps} $t=0,1,\hdots,T$ and the goal is to generate a sequence of \textit{actions} $\mathbf{a}_0,\mathbf{a}_1,\hdots,\mathbf{a}_{T-1}$ that quickly recovers the system from the incident. In other words, the goal is to minimize the \textit{recovery time} $T$. To formalize this goal, we model the progress toward system recovery with a \textit{recovery state}. We define this state based on the \textsc{mitre d3fend} \cite{kaloroumakis2021d3fend} taxonomy as follows.
\begin{definition}[Recovery state]\label{def:recovery_state}
  The recovery state is a vector
\begin{align}
\mathbf{s}_t = (s^{\mathrm{I}}_t, s^{\mathrm{S}}_t, s^{\mathrm{F}}_t, s^{\mathrm{E}}_t, s^{\mathrm{H}}_t,  s^{\mathrm{R}}_t), \label{eq:state_def}
\end{align}
where each component is a binary variable indicating whether a specific stage of response is completed. In particular,
\begin{itemize}
\item \textbf{\textit{Containment}}: $s^{\mathrm{I}}_t=1$ if the attack has been isolated and stopped from spreading; $s^{\mathrm{I}}_t=0$ otherwise.
\item \textbf{\textit{Assessment}}: $s^{\mathrm{S}}_t=1$ if the scope and severity of the attack have been determined; $s^{\mathrm{S}}_t=0$ otherwise.
\item \textbf{\textit{Preservation}}: $s^{\mathrm{F}}_t=1$ if forensic evidence related to the incident has been preserved; $s^{\mathrm{F}}_t=0$ otherwise.
\item \textbf{\textit{Eviction}}: $s^{\mathrm{E}}_t=1$ if the attacker's access has been revoked and potential malicious code or processes have been removed from the system; $s^{\mathrm{E}}_t=0$ otherwise.
\item \textbf{\textit{Hardening}}: $s^{\mathrm{H}}_t=1$ if the system has been hardened to prevent recurrence of the same attack; $s^{\mathrm{H}}_t=0$ otherwise.
\item \textbf{\textit{Restoration}}: $s^{\mathrm{R}}_t=1$ if services have been restarted and user access has been restored; $s^{\mathrm{R}}_t=0$ otherwise.  
\end{itemize}  
\end{definition}
Given this definition of the recovery state, we define the \textit{recovery time} to be the number of time steps until the \textit{terminal recovery state} $\mathbf{s}=(1,1,1,1,1,1)$ is reached. Since both the system and the attacker may behave stochastically, we model the recovery time as a random variable, as defined below.
\begin{definition}[Recovery time]\label{def:recovery_time}
The recovery time $T$ is a random variable that takes on values in the set $\{1,2,\hdots\}$ and represents the time to reach the terminal state, i.e.,
\begin{align*}
T = \inf\{t \mid t > 0, \mathbf{s}_t=(1,1,1,1,1,1)\}.
\end{align*}
\end{definition}
To illustrate the preceding definitions, we show two possible state trajectories $\mathbf{s}_0,\mathbf{s}_1, \hdots, \mathbf{s}_T$ in \figref{fig:states}. As shown in the figure, several response actions may achieve the same effect on the recovery state. For example, the severity of the attack can be determined in several ways. Moreover, certain response actions can lead to shorter recovery times by skipping intermediate steps. For instance, in the event of a denial of service (\textsc{dos}) attack, containment and eviction can often be achieved simultaneously by appropriate filtering of the network traffic. This single action both isolates the attack ($s_t^{\mathrm{I}}=1$) and revokes attacker access ($s_t^{\mathrm{E}}=1$). In contrast, an advanced persistent threat (\textsc{apt}) typically requires multiple actions to complete these stages. For example, containment may involve isolating compromised hosts ($s_t^{\mathrm{I}}=1$) and eviction may require malware removal or credential rotation ($s_t^{\mathrm{E}}=1$). Thus, the recovery time $T$ for an \textsc{apt} is typically longer than for a \textsc{dos} attack.

%\tikzexternaldisable
\begin{figure}
  \centering
  \scalebox{1.35}{
%\tikzsetnextfilename{states2}      
%   \includegraphics{states2.pdf}
   \begin{tikzpicture}

\node[draw,circle, minimum width=10mm, scale=0.45, fill=Blue!10](s0) at (0,0) {\large $\mathbf{s}_0$};
\node[draw,circle, minimum width=10mm, scale=0.45](s1) at (0,-1) {\small $\mathbf{s}_1,\mathbf{s}_2$};
\node[draw,circle, minimum width=10mm, scale=0.45](s2) at (-1,-2) {\large $\mathbf{s}_3$};
\node[draw,circle, minimum width=10mm, scale=0.45](s3) at (1,-2) {\large $\mathbf{s}'_3$};
\node[draw,circle, minimum width=10mm, scale=0.45](s4) at (0,-3) {\large $\mathbf{s}_4$};
\node[draw,circle, minimum width=10mm, scale=0.45](s5) at (0,-4) {\large $\mathbf{s}_5$};
\node[draw,circle, minimum width=10mm, scale=0.45](s6) at (0,-5) {\large $\mathbf{s}_6$};
%\node[draw,circle, minimum width=10mm, scale=0.45](s7) at (1,-5) {\large $\mathbf{s}'_6$};
\node[draw,circle, minimum width=10mm, scale=0.45, fill=Red!20](s8) at (0,-6) {\large $\mathbf{s}_T$};
%\node[draw,circle, minimum width=10mm, scale=0.45](s2) at (1,-1) {\large $\mathbf{s}_1'$};
%\node[draw,circle, minimum width=10mm, scale=0.45](s3) at (2,-2) {\large $\mathbf{s}_2$};
\draw[-{Latex[length=1.3mm]}, line width=0.17mm] (s0) to (s1);
\draw[-{Latex[length=1.3mm]}, line width=0.17mm] (s1) to (s2);
\draw[-{Latex[length=1.3mm]}, line width=0.17mm] (s1) to (s3);
\draw[-{Latex[length=1.3mm]}, line width=0.17mm, looseness=5, out=110, in=170] (s1) to (s1);
\draw[-{Latex[length=1.3mm]}, line width=0.17mm] (s3) to (s4);
\draw[-{Latex[length=1.3mm]}, line width=0.17mm] (s2) to (s4);
\draw[-{Latex[length=1.3mm]}, line width=0.17mm] (s4) to (s5);
\draw[-{Latex[length=1.3mm]}, line width=0.17mm] (s5) to (s6);
%\draw[-{Latex[length=1.3mm]}, line width=0.17mm] (s5) to (s7);
\draw[-{Latex[length=1.3mm]}, line width=0.17mm] (s6) to (s8);
\draw[-{Latex[length=1.3mm]}, line width=0.17mm, bend right=35] (s5) to (s8);
%\draw[-{Latex[length=1.3mm]}, line width=0.17mm] (s7) to (s8);
%\draw[-{Latex[length=1.3mm]}, line width=0.17mm, bend right=20] (s0) to (s2);
%\draw[-{Latex[length=1.3mm]}, line width=0.17mm] (s1) to (s3);
%\draw[-{Latex[length=1.3mm]}, line width=0.17mm] (s2) to (s3);
%\draw[-{Latex[length=1.3mm]}, line width=0.17mm] (s3) to (2.5,0);

%\draw[-{Latex[length=1.3mm]}, line width=0.17mm] (s1) to (2.9, -1.7);
%\draw[-{Latex[length=1.3mm]}, line width=0.17mm] (3.5, -1.7) to (4, -1.7);

%\node[inner sep=0pt,align=center, scale=1, color=black] (hacker) at (2.8,0) {
%$\hdots$
%};
%
%\node[inner sep=0pt,align=center, scale=0.7, color=black] (hacker) at (1.95,-1.7) {
%$s_1$
%};
%
%\node[inner sep=0pt,align=center, scale=0.7, color=black] (hacker) at (4.25,-1.7) {
%$s_T$
%};

\node[inner sep=0pt,align=center, scale=0.55, color=black] (hacker) at (1.8,-1) {
\textsc{isolate}
};
\node[inner sep=0pt,align=center, scale=0.55, color=black] (hacker) at (1.8,0) {
\textsc{initiate}
};
\node[inner sep=0pt,align=center, scale=0.55, color=black] (hacker) at (1.8,-2) {
\textsc{assess}
};
\node[inner sep=0pt,align=center, scale=0.55, color=black] (hacker) at (1.8,-3) {
\textsc{preserve}
};
\node[inner sep=0pt,align=center, scale=0.55, color=black] (hacker) at (1.8,-4) {
\textsc{evict}
};
\node[inner sep=0pt,align=center, scale=0.55, color=black] (hacker) at (1.8,-5) {
\textsc{harden}
};
\node[inner sep=0pt,align=center, scale=0.55, color=black] (hacker) at (1.8,-6) {
\textsc{restore}
};
%\node[inner sep=0pt,align=center, scale=0.5, color=black] (hacker) at (2.8,0.3) {
%$\mathbf{s} = (s^{\mathrm{I}}, s^{\mathrm{S}}, s^{\mathrm{F}}, s^{\mathrm{E}}, s^{\mathrm{H}},  s^{\mathrm{R}})$
%};
\node[inner sep=0pt,align=center, scale=0.5, color=black] (hacker) at (2.55,0.22) {
$s^{\mathrm{I}}$
};
\node[inner sep=0pt,align=center, scale=0.5, color=black] (hacker) at (2.75,0.22) {
$s^{\mathrm{S}}$
};
\node[inner sep=0pt,align=center, scale=0.5, color=black] (hacker) at (2.94,0.22) {
$s^{\mathrm{F}}$
};
\node[inner sep=0pt,align=center, scale=0.5, color=black] (hacker) at (3.12,0.22) {
$s^{\mathrm{E}}$
};
\node[inner sep=0pt,align=center, scale=0.5, color=black] (hacker) at (3.31,0.22) {
$s^{\mathrm{H}}$
};
\node[inner sep=0pt,align=center, scale=0.5, color=black] (hacker) at (3.51,0.22) {
$s^{\mathrm{R}}$
};

\node[inner sep=0pt,align=center, scale=0.55, color=black] (hacker) at (3,-1) {
$(\text{\textcolor{Red}{$1$}},0,0,0,0,0)$
};
\node[inner sep=0pt,align=center, scale=0.55, color=black] (hacker) at (3,0) {
$(0,0,0,0,0,0)$
};
\node[inner sep=0pt,align=center, scale=0.55, color=black] (hacker) at (3,-2) {
$(1,\text{\textcolor{Red}{$1$}},0,0,0,0)$
};
\node[inner sep=0pt,align=center, scale=0.55, color=black] (hacker) at (3,-3) {
$(1,1,\text{\textcolor{Red}{$1$}},0,0,0)$
};
\node[inner sep=0pt,align=center, scale=0.55, color=black] (hacker) at (3,-4) {
$(1,1,1,\text{\textcolor{Red}{$1$}},0,0)$
};
\node[inner sep=0pt,align=center, scale=0.55, color=black] (hacker) at (3,-5) {
$(1,1,1,1,\text{\textcolor{Red}{$1$}},0)$
};
\node[inner sep=0pt,align=center, scale=0.55, color=black] (hacker) at (3,-6) {
$(1,1,1,1,1,\text{\textcolor{Red}{$1$}})$
};

\node[inner sep=0pt,align=center, scale=0.45, color=black] (hacker) at (0,0.42) {
\textit{logs indicates anomalous}\\\textit{activity on a host}
};
\node[inner sep=0pt,align=center, scale=0.45, color=black] (hacker) at (0.78,-0.4) {
$\mathbf{a}_0$: \textit{segment network}\\
\textit{to isolate host}
};

\node[inner sep=0pt,align=center, scale=0.45, color=black] (hacker) at (-0.91,-0.7) {
\textcolor{black}{$\mathbf{a}_1$: \textit{malware}}\\
\textit{scan \textcolor{Red}{(failed)}}
};

\node[inner sep=0pt,align=center, scale=0.45, color=black] (hacker) at (-1.43,-1.4) {
\textcolor{black}{$\mathbf{a}_2$: \textit{analyze logs}}\\
\textcolor{Blue}{\textit{(found malicious process)}}
};
\node[inner sep=0pt,align=center, scale=0.45, color=black] (hacker) at (1.43,-1.4) {
\textcolor{black}{$\mathbf{a}'_2$: \textit{analyze processes}}\\
\textcolor{Blue}{\textit{(found malicious process)}}
};

\node[inner sep=0pt,align=center, scale=0.45, color=black] (hacker) at (0,-2.2) {
$\mathbf{a}_3$: \textit{memory dump}
};
\node[inner sep=0pt,align=center, scale=0.45, color=black] (hacker) at (0.68,-3.5) {
$\mathbf{a}_4$: \textit{stop process}
};

\node[inner sep=0pt,align=center, scale=0.45, color=black] (hacker) at (0.6,-4.4) {
$\mathbf{a}_5$: \textit{upgrade}\\
\textit{affected software}
};

\node[inner sep=0pt,align=center, scale=0.45, color=black] (hacker) at (-1,-5) {
$\mathbf{a}'_5$: \textit{live-patch}\\
\textit{vulnerability}
};

\node[inner sep=0pt,align=center, scale=0.45, color=black] (hacker) at (0.72,-5.4) {
$\mathbf{a}_6$: \textit{restart service}
};

%\node[inner sep=0pt,align=center, scale=0.55, color=black] (hacker) at (3,0) {
%\textsc{initiation} $(0,0,0,0,0,0)$
%};
%\node[inner sep=0pt,align=center, scale=0.55, color=black] (hacker) at (3,-2) {
%\textsc{assessment} $(1,1,0,0,0,0)$
%};
%\node[inner sep=0pt,align=center, scale=0.55, color=black] (hacker) at (3,-3) {
%\textsc{preservation} $(1,1,1,0,0,0)$
%};
%\node[inner sep=0pt,align=center, scale=0.55, color=black] (hacker) at (3,-4) {
%\textsc{eviction} $(1,1,1,1,0,0)$
%};
%\node[inner sep=0pt,align=center, scale=0.55, color=black] (hacker) at (3,-5) {
%\textsc{hardening} $(1,1,1,1,1,0)$
%};
%
%\node[inner sep=0pt,align=center, scale=0.55, color=black] (hacker) at (3,-6) {
%\textsc{recovery} $(1,1,1,1,1,1)$
%};

%\node[inner sep=0pt,align=center, scale=0.55, color=black] (hacker) at (0.1,-2.1) {
%$(0,0,0,0,0,0)$
%};
%\node[inner sep=0pt,align=center, scale=0.37, color=black] (hacker) at (0.1,-2.3) {
%$s^{\mathrm{I}}_t, s^{\mathrm{S}}_t, s^{\mathrm{F}}_t, s^{\mathrm{E}}_t, s^{\mathrm{H}}_t,  s^{\mathrm{R}}_t$
%};

\end{tikzpicture}        
  }
  \caption{Two example evolutions of the recovery state $\mathbf{s}_t$; cf.~(\ref{eq:state_def}). The first recovery trajectory involves the actions $\mathbf{a}_0,\mathbf{a}_1,\mathbf{a}_2,\mathbf{a}_3, \mathbf{a}_4, \mathbf{a}_5, \mathbf{a}_6$ and the second trajectory involves the actions $\mathbf{a}_0, \mathbf{a}_1, \mathbf{a}^{\prime}_2, \mathbf{a}_3, \mathbf{a}_4, \mathbf{a}^{\prime}_5$.}
  \label{fig:states}
\end{figure}
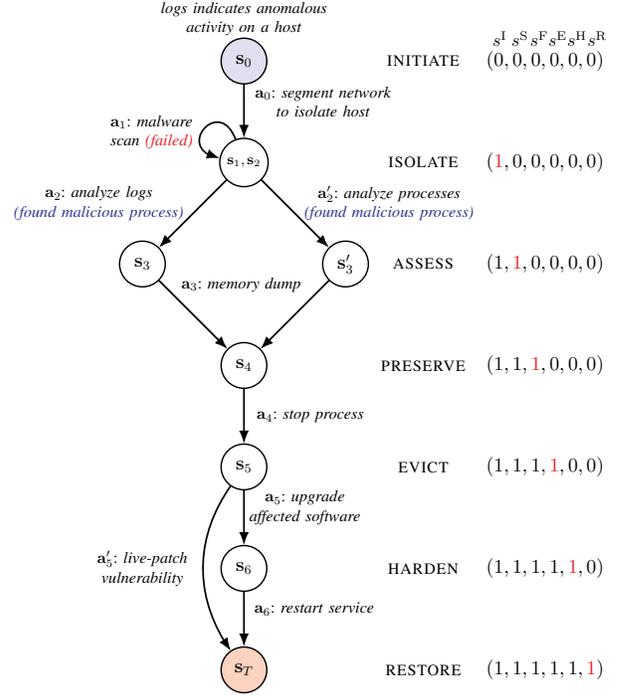
%\tikzexternalenable
\vspace{2mm}
\noindent\textit{\textbf{Response generation.}}
Because the recovery state contains information about the attacker, it is generally not known with certainty. However, the \textsc{llm} can predict the state based on the available system logs and threat intelligence, which we denote by $\mathbf{I}$. Such predictions allow us to generate a response plan through auto-regressive sampling as follows. We start by generating the first action as $\mathbf{a}_0 \sim p_{\bm{\theta}^{\prime}}(\cdot \mid \mathbf{s}_0, \mathbf{I})$, where the initial state is $\mathbf{s}_0=(0,0,0,0,0,0)$. Subsequently, we evaluate the effect of the action by predicting the next recovery state as $\tilde{\mathbf{s}}_1 \sim p_{\bm{\theta}^{\prime}}(\cdot \mid \mathbf{s}_0, \mathbf{a}_0, \mathbf{I})$. We then repeat the same procedure to generate the next action as $\mathbf{a}_1 \sim p_{\bm{\theta}^{\prime}}(\cdot \mid \tilde{\mathbf{s}}_1, \mathbf{I})$. This iterative procedure continues until the \textsc{llm} predicts that the terminal recovery state $\tilde{\mathbf{s}}_{t} = (1,1,1,1,1,1)$ has been reached.
\begin{remark}
To instruct the \textsc{llm} whether to generate an action or to predict the state, we append an instruction to the prompt. For brevity, we do not explicitly denote this instruction in the equations. Our prompt templates are available at \cite{llm_source_kim}.
\end{remark}  

The expected time to recover from the incident when using response actions generated by the \textsc{llm} depends on the current recovery state $\mathbf{s}_t$ (which captures the effects of previous actions) and the type of incident, as characterized by the vector $\mathbf{I}$. We formally define this recovery time-to-go as follows.
\begin{definition}[Recovery time-to-go]\label{def:time_to_go}
Given an incident described by $\mathbf{I}$, the expected recovery time-to-go from the state $\mathbf{s}$ when executing actions generated by the \textsc{llm} $p_{\bm{\theta}^{\prime}}$ is
\begin{align*}
J(\mathbf{s}) =
\begin{cases}
    0 & \text{if } \mathbf{s}=(1,1,1,1,1,1), \\
    \mathbb{E}_{\mathbf{a}_t \sim p_{\bm{\theta}^{\prime}}(\cdot \mid \tilde{\mathbf{s}}_t, \mathbf{I})}\left\{T \mid \mathbf{s}_0=\mathbf{s}, \mathbf{I}\right\} & \text{otherwise.}
\end{cases}
\end{align*}
\end{definition}
Given this definition, we say that a response action is \textit{hallucinated} if it has no effect on the expected recovery time-to-go. In other words, it does not contribute any progress toward recovery. This notion is formally defined below.
\begin{definition}[Hallucinated response action]\label{def:hallucination}
A response action $\mathbf{a}_t$ is hallucinated if it leads to a recovery state with the same expected recovery time-to-go as the current state, i.e.,
\begin{align*}
J(\mathbf{s}_t)-\mathbb{E}_{\mathbf{s}_{t+1}}\left\{J(\mathbf{s}_{t+1}) \mid \mathbf{a}_t, \mathbf{s}_t, \mathbf{I}\right\} = 0, && \text{for all }\mathbf{s}_t \in \tilde{\mathcal{S}},
\end{align*}
where $\tilde{\mathcal{S}}$ denotes the set of all states except $(1,1,1,1,1,1)$.
\end{definition}
This definition implies that hallucinations can be avoided by iteratively generating actions until one is found that reduces the expected recovery time-to-go. However, this approach to reducing hallucinations is not feasible in practice, as computing the recovery time requires knowledge of the attacker’s behavior. An alternative approach is to involve a security operator to manually assess whether each action is hallucinated. While feasible, this approach defeats the purpose of using an \textsc{llm} in the first place, namely to assist security operators.

To circumvent these limitations, we adopt a different approach, known as \textit{self-verification} \cite{weng-etal-2023-large}. Following this approach, we \textit{estimate} the recovery time-to-go of response actions using the \textsc{llm} itself. This verification enforces a form of \textit{self-consistency} \cite{DBLP:conf/iclr/0002WSLCNCZ23}, where actions are validated against the \textsc{llm}’s predicted outcomes. Such validations have been shown to reduce hallucinations (see e.g., \cite{chen2023universalselfconsistencylargelanguage} and \cite{10.5555/3666122.3668141}) and form the basis for our planning algorithm, as described below. (For a theoretical justification for why this approach can reduce the probability of hallucinated response actions, see \sectionref{sec:hall_theory}.)

\vspace{2mm}
\noindent\textit{\textbf{Planning algorithm.}}  At each time $t$ of the response, we use the \textsc{llm} to generate $N$ candidate actions $\mathcal{A}^N_t=\{\mathbf{a}^1_t,\hdots,\mathbf{a}^N_t\}$. Then, for each action $\mathbf{a}^i_t$, we use the \textsc{llm} to simulate a \textit{recovery trajectory} $\tilde{\mathbf{s}}_{t+1},\mathbf{a}_{t+1}\hdots,$ by sampling actions and updating the state until $\tilde{\mathbf{s}}_T=(1,1,1,1,1,1)$. We then use the length of the simulated trajectory as an estimate of the expected recovery time-to-go. We define this estimate as
\begin{align*}
\tilde{Q}(\tilde{\mathbf{s}}_t, \mathbf{a}_t^i) \approx 1 + \sum_{\mathbf{s}_{t+1} \in \mathcal{S}}p_{\bm{\theta}^{\prime}}(\mathbf{s}_{t+1} \mid \tilde{\mathbf{s}}_t, \mathbf{a}_t^i, \mathbf{I})\Tilde{J}(\mathbf{s}_{t+1}),
\end{align*}
where $\Tilde{J}$ is the estimated recovery time-to-go function.

Finally, we select the action with the shortest expected recovery time-to-go according to the estimate, i.e.,
\begin{align}
\tilde{\mathbf{a}}_t \in \argmin_{\mathbf{a}^i_t \in \mathcal{A}^N_t}\tilde{Q}(\tilde{\mathbf{s}}_t, \mathbf{a}^{i}_t). \label{eq:rollout}
\end{align}
This planning procedure is illustrated conceptually in \figref{fig:lookahead_planning} and the pseudocode is listed in \myalgref{alg:our_method}. In the next section, we analyze the theoretical properties of this procedure and establish conditions under which it reduces hallucination. We also derive a bound on its hallucination probability.
%\tikzexternaldisable
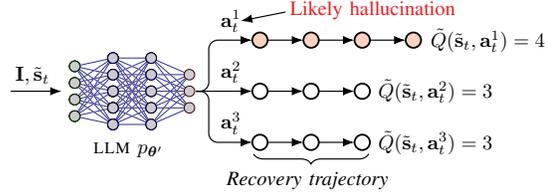
\begin{figure}
  \centering
%\tikzsetnextfilename{lookahead_planning}  
  \scalebox{1.7}{
   \begin{tikzpicture}

\draw[-{Latex[length=0.9mm]}, line width=0.09mm, rounded corners=2pt] (-0.95,0) to (-0.55,0);

\node[inner sep=0pt,align=center, scale=0.45, color=black] (hacker) at (-0.75,0.12) {
$\mathbf{I},\tilde{\mathbf{s}}_t$
};

\node[scale=0.13] (impact) at (0,0) {
\begin{tikzpicture}[x=2.3cm,y=1.0cm]
  \readlist\Nnod{4,5,5,3}
  
  \foreachitem \N \in \Nnod{
    \def\lay{\Ncnt}
    \pgfmathsetmacro\prev{int(\Ncnt-1)}
    \message{\lay,}
    \foreach \i [evaluate={\y=\N/2-\i; \x=\lay; \n=\nstyle;
                           \nprev=int(\prev<\Nnodlen?min(2,\prev):3);}] in {1,...,\N}{
      
      \coordinate (N\lay-\i) at (\x,\y);
      
      \ifnum\lay>1
        \foreach \j in {1,...,\Nnod[\prev]}{
          \draw[connect,white,line width=1.2] (N\prev-\j) -- (N\lay-\i);
          \draw[connect] (N\prev-\j) -- (N\lay-\i);
          \node[node \nprev,minimum size=18] at (N\prev-\j) {};
        }
        \ifnum \lay=\Nnodlen
          \node[node \n,minimum size=18] at (N\lay-\i) {};
        \fi
      \fi
      
    }
  }
  \end{tikzpicture}
};

\node[draw,circle, minimum width=10mm, scale=0.12, fill=Red!20](s00) at (1,0.4) {};
\node[draw,circle, minimum width=10mm, scale=0.12, fill=Red!20](s01) at (1.4,0.4) {};
\node[draw,circle, minimum width=10mm, scale=0.12, fill=Red!20](s02) at (1.8,0.4) {};
\node[draw,circle, minimum width=10mm, scale=0.12, fill=Red!20](s03) at (2.2,0.4) {};

\node[draw,circle, minimum width=10mm, scale=0.12](s10) at (1,0) {};
\node[draw,circle, minimum width=10mm, scale=0.12](s11) at (1.4,0) {};
\node[draw,circle, minimum width=10mm, scale=0.12](s12) at (1.8,0) {};

\node[draw,circle, minimum width=10mm, scale=0.12](s20) at (1,-0.4) {};
\node[draw,circle, minimum width=10mm, scale=0.12](s21) at (1.4,-0.4) {};
\node[draw,circle, minimum width=10mm, scale=0.12](s22) at (1.8,-0.4) {};

\draw[-{Latex[length=0.9mm]}, line width=0.09mm, rounded corners=2pt] (0.5,0) to (s10);
\draw[-{Latex[length=0.9mm]}, line width=0.09mm, rounded corners=2pt] (0.5,0) to (0.6, 0) to (0.6, 0.4) to (s00);
\draw[-{Latex[length=0.9mm]}, line width=0.09mm, rounded corners=2pt] (0.5,0) to (0.6, 0) to (0.6, -0.4) to (s20);

\draw[-{Latex[length=0.9mm]}, line width=0.09mm, rounded corners=2pt] (s00) to (s01);
\draw[-{Latex[length=0.9mm]}, line width=0.09mm, rounded corners=2pt] (s01) to (s02);
\draw[-{Latex[length=0.9mm]}, line width=0.09mm, rounded corners=2pt] (s02) to (s03);

\draw[-{Latex[length=0.9mm]}, line width=0.09mm, rounded corners=2pt] (s10) to (s11);
\draw[-{Latex[length=0.9mm]}, line width=0.09mm, rounded corners=2pt] (s11) to (s12);

\draw[-{Latex[length=0.9mm]}, line width=0.09mm, rounded corners=2pt] (s20) to (s21);
\draw[-{Latex[length=0.9mm]}, line width=0.09mm, rounded corners=2pt] (s21) to (s22);

\node[inner sep=0pt,align=center, scale=0.45, color=black] (hacker) at (0.8,0.15) {
$\mathbf{a}^2_t$
};
\node[inner sep=0pt,align=center, scale=0.45, color=black] (hacker) at (0.8,0.55) {
$\mathbf{a}^1_t$
};
\node[inner sep=0pt,align=center, scale=0.45, color=black] (hacker) at (0.8,-0.25) {
$\mathbf{a}^3_t$
};

\node[inner sep=0pt,align=center, scale=0.45, color=black] (hacker) at (2.81,0.4) {
$\tilde{Q}(\tilde{\mathbf{s}}_t, \mathbf{a}^{1}_t) = 4$
};

\node[inner sep=0pt,align=center, scale=0.45, color=black] (hacker) at (2.4,0) {
$\tilde{Q}(\tilde{\mathbf{s}}_t, \mathbf{a}^{2}_t) = 3$
};
\node[inner sep=0pt,align=center, scale=0.45, color=black] (hacker) at (2.4,-0.4) {
$\tilde{Q}(\tilde{\mathbf{s}}_t, \mathbf{a}^{3}_t) = 3$
};

\node[inner sep=0pt,align=center, scale=0.45, color=black] (hacker) at (0,-0.46) {
\textsc{llm} $p_{\bm{\theta}^{\prime}}$
};

\node[inner sep=0pt,align=center, scale=0.45, color=black] (hacker) at (1.4,-0.7) {
\textit{Recovery trajectory}
};

\node[inner sep=0pt,align=center, scale=0.45, color=black] (hacker) at (1.91,0.63) {
\textcolor{Red}{Likely hallucination}
};
\draw[-{Latex[length=0.9mm]}, line width=0.09mm, rounded corners=2pt] (1.2, 0.62) to (0.85, 0.55);

\node[scale=0.9, rotate=90] (s3) at (1.4,-0.53)
{
\begin{tikzpicture}
\draw[decorate, decoration={brace, mirror, amplitude=2.5pt}, line width=0.1mm] (2,1) -- (2,0);
\end{tikzpicture}
};

\end{tikzpicture}    
  }
  \caption{Our planning procedure to circumvent hallucinations. At each time step of the response, we prompt the \textsc{llm} with the incident description $\mathbf{I}$ and the current (predicted) recovery state $\tilde{\mathbf{s}}_t$ to generate $N$ candidate actions (here $N=3$). We then estimate the expected recovery time-to-go [denoted by $\Tilde{Q}(\tilde{\mathbf{s}}_t, \mathbf{a}^i_t)$] of each action $\mathbf{a}^i_t$ by using the \textsc{llm} to simulate possible recovery trajectories. If an action is hallucinated (per \defref{def:hallucination}), then it does not make progress toward recovery and thus leads to a longer recovery trajectory. Therefore, we select the action that leads to the shortest predicted recovery trajectory (either $\mathbf{a}^2_t$ or $\mathbf{a}^3_t$ in this example); cf.~(\ref{eq:rollout}). This planning allows us to circumvent hallucinations under certain conditions, see \sectionref{sec:filter} for details.}
  \label{fig:lookahead_planning}
\end{figure}
%\tikzexternalenable

\begin{algorithm}
\footnotesize
\SetNoFillComment
\SetKwProg{myalg}{Method}{}{}
\SetKwBlock{recov}{Procedure \normalfont \textsc{recovery-time} ($\tilde{\mathbf{s}}$, $\mathbf{a}, \mathbf{I}$)}{end}
\DontPrintSemicolon
\caption{Incident response planning with an \textsc{llm}.}\label{alg:our_method}
\textbf{Input:} \textsc{llm} $p_{\bm{\theta}'}$, system logs $\mathbf{I}$, \# actions $N$, \# samples $M$.\;
%$\quad\quad\text{ }\text{ }$ number of candidate actions $N$, number of samples $M$.\;
\textbf{Output:} A response plan $\rho$, i.e., a sequence of response actions.\;
Initialize $\tilde{\mathbf{s}}_0\leftarrow(0,0,0,0,0,0), \rho \leftarrow \emptyset, t\leftarrow 0$.\;
\While{$\tilde{\mathbf{s}}_t \neq (1,1,1,1,1,1)$}{
Sample $\mathbf{a}^1_t,\hdots,\mathbf{a}^N_t$ from $p_{\bm{\theta}^{\prime}}(\cdot \mid \tilde{\mathbf{s}}_t, \mathbf{I})$.\;
\For{$i=1,2,\hdots,N$}{
$\Tilde{Q}(\tilde{\mathbf{s}}_t, \mathbf{a}^i_t) = \frac{1}{M}\sum_{k=1}^{M}\text{\textsc{recovery-time}}(\tilde{\mathbf{s}}_t, \mathbf{a}^i_t, \mathbf{I})$.\;    
}  
Select action $\tilde{\mathbf{a}}_t \in \argmin_{\mathbf{a}^i_t}\Tilde{Q}(\tilde{\mathbf{s}}_t, \mathbf{a}^i_t)$, $\rho \leftarrow \rho \cup \{(t, \tilde{\mathbf{a}}_t)\}$.\;
Update the state as $\tilde{\mathbf{s}}_{t+1} \sim p_{\bm{\theta}^{\prime}}(\cdot \mid \tilde{\mathbf{s}}_{t},\tilde{\mathbf{a}}_{t}, \mathbf{I})$, $t \leftarrow t+1$.\;
}
\Return $\rho$.\;
%$\quad$\;
\recov{   
  Predict the state as $\tilde{\mathbf{s}}^{\prime} \sim p_{\bm{\theta}^{\prime}}(\cdot \mid \tilde{\mathbf{s}},\mathbf{a}, \mathbf{I})$.\;
  \If{$\tilde{\mathbf{s}}^{\prime} = (1,1,1,1,1,1)$}{
      \Return $1$.\;
    }\Else{
      Sample $\mathbf{a}^{\prime}$ from $p_{\bm{\theta}^{\prime}}(\cdot \mid \tilde{\mathbf{s}}^{\prime}, \mathbf{I})$.\;
      \Return $1 + \text{\textsc{recovery-time}}(\tilde{\mathbf{s}}^{\prime}, \mathbf{a}^{\prime}, \mathbf{I})$.\;
    }
  
  }
\normalsize
 \end{algorithm}
\section{Analysis of the Hallucination Probability}\label{sec:hall_theory}
To analyze the probability that our method generates a hallucinated response action, we distinguish between two cases: (\textit{i}) at least one of the $N$ candidate actions is non-hallucinated; and (\textit{ii}) all $N$ actions are hallucinated; cf.~(\ref{eq:rollout}). In the following, we establish a sufficient condition under which hallucinations are avoided in case (\textit{i}), and derive a probabilistic upper bound on the probability that case (\textit{ii}) occurs.

\subsection{Sufficient Conditions for Filtering Hallucinations}\label{sec:filter}
The purpose of the minimization (\ref{eq:rollout}) is to filter \textit{hallucinated actions}, i.e., actions that do not affect the recovery time. This filtering is effective when the lookahead simulations [cf.~lines $13$--$22$ in \myalgref{alg:our_method}] accurately reflect that hallucinated actions have no beneficial impact on the expected recovery time. However, because these simulations rely on the \textsc{llm} to predict action outcomes, the filtering is inherently imperfect. Consequently, the effectiveness of the planning step [cf.~(\ref{eq:rollout})] depends on two key factors: (\textit{i}) the degree to which hallucinated and non-hallucinated actions can be distinguished based on their impact on expected recovery time; and (\textit{ii}) the accuracy of the \textsc{llm}'s predictions of the resulting recovery state $\mathbf{s}_t$; cf.~\defref{def:recovery_state}.

To quantify these two factors, let $\mathcal{A}$ denote the set of all possible response actions (as defined by the vocabulary of the \textsc{llm}) and let $\mathcal{A}(\mathbf{s},\mathbf{I})$ be the subset of non-hallucinated actions for the incident described by $\mathbf{I}$, given the recovery state $\mathbf{s}$. Moreover, let $\delta$ denote the minimal change in the recovery time-to-go when taking a non-hallucinated action, i.e.,
\begin{align*}
\delta = \min\left\{J(\mathbf{s}_t)-\mathbb{E}_{\mathbf{s}_{t+1}}\left\{J(\mathbf{s}_{t+1}) \mid \mathbf{a}, \mathbf{s}_t, \mathbf{I}\right\} \big\vert\text{ } \mathbf{a}\in\mathcal{A}(\mathbf{s},\mathbf{I})\right\}.
\end{align*}
In view of \defref{def:hallucination}, we have $\delta > 0$.

Similarly, let $\eta$ denote the total variation between the \textsc{llm}'s predictions and the true system dynamics (denoted by $P$), i.e.,
\begin{align*}
\sum_{\mathbf{s}^{\prime} \in \mathcal{S}} \big\vert p_{\bm{\theta}^{\prime}}(\mathbf{s}^{\prime} \mid \mathbf{s}, \mathbf{a}, \mathbf{I}) - P(\mathbf{s}^{\prime} \mid \mathbf{s}, \mathbf{a},\mathbf{I}) \big\vert \leq \eta, && \forall \mathbf{s}\in\tilde{\mathcal{S}},\mathbf{a}\in\mathcal{A},
\end{align*}
where $\mathcal{S}$ is the set of all recovery states and $\tilde{\mathcal{S}}$ is the set of non-terminal recovery states, i.e., $\tilde{\mathcal{S}}=\mathcal{S} \setminus \{(1,1,1,1,1,1)\}$. Note that the parameter $\eta$ is upper bounded by $2$, i.e., $0 \leq \eta \leq 2$.

Given the parameters $\delta$ and $\eta$, we have the following result.
\begin{proposition}\label{theorem:hallucination_bound}
Assuming that a) the number of sample trajectories $M$ in \myalgref{alg:our_method} is sufficiently large so that the empirical mean approximates the true expectation and b) that both the expected recovery time and the \textsc{llm}'s predicted recovery time are finite, i.e., $\norm{J}_{\infty}< \infty$ and $\norm{\Tilde{J}}_{\infty} < \infty$. If at least one action in the set $\mathcal{A}^N_t$ [cf.~(\ref{eq:rollout})] is non-hallucinated and
\begin{align*}
\delta > 2\eta\norm{J}_{\infty}\left(\norm{\tilde{J}}_{\infty} + 1\right),
\end{align*}
then the action selected by \myalgref{alg:our_method} will be non-hallucinated.
\end{proposition}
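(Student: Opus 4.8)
The plan is to show that, under the stated gap condition, the non-hallucinated candidate guaranteed by the hypothesis attains a strictly smaller estimated recovery time-to-go $\tilde{Q}$ than every hallucinated candidate, so the minimizer in (\ref{eq:rollout}) cannot be hallucinated. Since $\tilde{Q}(\tilde{\mathbf{s}}_t,\mathbf{a}) = 1 + \sum_{\mathbf{s}'} p_{\bm{\theta}'}(\mathbf{s}' \mid \tilde{\mathbf{s}}_t, \mathbf{a}, \mathbf{I})\tilde{J}(\mathbf{s}')$ and the constant $1$ is common to all actions, it suffices to compare the estimated expected next-state values $g(\mathbf{a}) := \sum_{\mathbf{s}'}p_{\bm{\theta}'}(\mathbf{s}'\mid\tilde{\mathbf{s}}_t,\mathbf{a},\mathbf{I})\tilde{J}(\mathbf{s}')$ against their true counterparts $f(\mathbf{a}) := \sum_{\mathbf{s}'}P(\mathbf{s}'\mid\tilde{\mathbf{s}}_t,\mathbf{a},\mathbf{I})J(\mathbf{s}')$. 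By assumption (a) I may treat the Monte-Carlo average in \myalgref{alg:our_method} as its expectation $g(\mathbf{a})$, and by construction of the \textsc{recovery-time} procedure $\tilde{J}$ is the value function of the \textsc{llm} policy under the \textsc{llm}'s own transition model $p_{\bm{\theta}'}$, while $J$ (\defref{def:time_to_go}) is the same policy's value under the true dynamics $P$.

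First I would record the true separation. \defref{def:hallucination} gives $f(\mathbf{a}_{\mathrm{h}}) = J(\tilde{\mathbf{s}}_t)$ for any hallucinated $\mathbf{a}_{\mathrm{h}}$, and the definition of $\delta$ gives $f(\mathbf{a}_{\mathrm{n}}) \le J(\tilde{\mathbf{s}}_t) - \delta$ for the non-hallucinated candidate $\mathbf{a}_{\mathrm{n}}$; hence $f(\mathbf{a}_{\mathrm{h}}) - f(\mathbf{a}_{\mathrm{n}}) \ge \delta$. The remaining task is to transfer this separation from $(P,J)$ to $(p_{\bm{\theta}'},\tilde{J})$, which I would do by bounding $|g(\mathbf{a})-f(\mathbf{a})|$ uniformly. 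Splitting $g(\mathbf{a})-f(\mathbf{a}) = \sum_{\mathbf{s}'}p_{\bm{\theta}'}(\mathbf{s}')[\tilde{J}(\mathbf{s}')-J(\mathbf{s}')] + \sum_{\mathbf{s}'}[p_{\bm{\theta}'}(\mathbf{s}')-P(\mathbf{s}')]J(\mathbf{s}')$ and using $\sum_{\mathbf{s}'}|p_{\bm{\theta}'}-P|\le\eta$ yields $|g(\mathbf{a})-f(\mathbf{a})| \le \norm{\tilde{J} - J}_{\infty} + \eta\norm{J}_{\infty}$.

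The heart of the argument is the perturbation bound $\norm{J-\tilde{J}}_{\infty} \le \eta\norm{J}_{\infty}\norm{\tilde{J}}_{\infty}$. Writing $P^{\pi}$ and $\tilde{P}^{\pi}$ for the state-to-state kernels induced by the \textsc{llm} policy under $P$ and $p_{\bm{\theta}'}$ respectively (restricted to the non-terminal states $\tilde{\mathcal{S}}$), the Bellman equations $J = \mathbf{1} + P^{\pi} J$ and $\tilde{J} = \mathbf{1} + \tilde{P}^{\pi}\tilde{J}$ give the standard identity $J-\tilde{J} = (I-P^{\pi})^{-1}(P^{\pi}-\tilde{P}^{\pi})\tilde{J}$. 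Assumption (b) makes the policy proper, so $(I-P^{\pi})^{-1}=\sum_{k\ge 0}(P^{\pi})^{k}$ is a nonnegative matrix whose row sums equal $J$ (because $(I-P^{\pi})^{-1}\mathbf{1} = J$), while each entry of $(P^{\pi}-\tilde{P}^{\pi})\tilde{J}$ is bounded by $\eta\norm{\tilde{J}}_{\infty}$ since the kernels differ by at most $\eta$ in total variation. Multiplying the row-sum bound $J(\mathbf{s})\le\norm{J}_{\infty}$ by $\eta\norm{\tilde{J}}_{\infty}$ gives the claim. Combining the pieces, $|g(\mathbf{a})-f(\mathbf{a})| \le \eta\norm{J}_{\infty}(\norm{\tilde{J}}_{\infty}+1)$, so $g(\mathbf{a}_{\mathrm{h}})-g(\mathbf{a}_{\mathrm{n}}) \ge \delta - 2\eta\norm{J}_{\infty}(\norm{\tilde{J}}_{\infty}+1) > 0$ under the hypothesis; thus $\tilde{Q}(\tilde{\mathbf{s}}_t,\mathbf{a}_{\mathrm{n}}) < \tilde{Q}(\tilde{\mathbf{s}}_t,\mathbf{a}_{\mathrm{h}})$ for every hallucinated candidate, and the minimizer in (\ref{eq:rollout}) is therefore non-hallucinated.

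I expect the main obstacle to be the perturbation bound in the third paragraph: because this is a stochastic shortest-path problem without a discount factor, the naive sup-norm recursion $\norm{J-\tilde{J}}_{\infty} \le \norm{J-\tilde{J}}_{\infty} + \eta\norm{\tilde{J}}_{\infty}$ is vacuous, and one must instead exploit properness (assumption (b)) through the explicit resolvent $(I-P^{\pi})^{-1}$ and the fact that its row sums are exactly $J$ to obtain the state-dependent factor $\norm{J}_{\infty}$ rather than an unbounded $1/(1-\gamma)$-type constant. A secondary point to handle carefully is making assumption (a) precise, i.e., quantifying the number of samples $M$ needed so that each empirical mean is within $\tfrac{1}{4}[\delta - 2\eta\norm{J}_{\infty}(\norm{\tilde{J}}_{\infty}+1)]$ of $g(\mathbf{a})$ with high probability, so that the strict inequality survives finite sampling.
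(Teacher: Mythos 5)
Your proposal is correct and follows essentially the same route as the paper's own proof: the same decomposition of $|\tilde{Q}-Q|$ into a value-error term plus a transition-error term, the same key perturbation lemma $\norm{\tilde{J}-J}_{\infty}\le\eta\norm{J}_{\infty}\norm{\tilde{J}}_{\infty}$ proved via the Bellman identities and the Neumann-series/row-sum characterization of the resolvent, and the same final comparison yielding the $2\Delta$ margin. The only differences are cosmetic — you anchor the resolvent on the true kernel $P^{\pi}$ where the paper uses the \textsc{llm}'s kernel $\tilde{\mathbf{F}}$ (the bound is symmetric, so both work), and you argue the separation directly rather than by contradiction.
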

We present the proof of \propref{theorem:hallucination_bound} in \appendixref{app:hallucination_bound_proof}. This proposition provides a sufficient condition under which the minimization (\ref{eq:rollout}) effectively filters hallucinated actions. The main condition of the proposition is that $\delta$ (which captures the degree to which hallucinations and non-hallucinations can be distinguished) is sufficiently large in comparison with the inaccuracy of the \textsc{llm}'s predictions, as quantified by $\eta$. While these parameters are likely unknown in practice, they can be estimated offline. For instance, $\delta$ can be estimated based on a curated set of incidents with known recovery outcomes. Similarly, $\eta$ can be estimated by comparing the \textsc{llm}'s predictions with traces of historical incidents.

If at least one action in the set $\mathcal{A}^N_t$ [cf.~(\ref{eq:rollout})] would always be non-hallucinated, \propref{theorem:hallucination_bound} would imply a condition that provides a guarantee of avoiding hallucinations. However, in practice, it is possible that all actions in $\mathcal{A}^N_t$ are hallucinated, in which case the lookahead minimization (\ref{eq:rollout}) will not help. We quantify the probability of this event in the next subsection.
\subsection{Upper Bound on the Hallucination Probability}
To complement the above condition for filtering hallucinations, we now analyze the \textit{hallucination probability}. The main difficulty in this analysis is that the \textsc{llm}'s propensity to hallucinate is not known a priori. For this reason, we base our analysis on empirical observations of its behavior.

To obtain such empirical observations, we start by using the \textsc{llm} to generate $L$ sample actions. We then verify how many of those actions are hallucinated to estimate the \textsc{llm}'s hallucination probability $h$. We denote this estimate by $\overline{h}$. Due to sampling variability, this estimate may differ substantially from the probability $h$. To address this possibility, we establish a bound that quantifies how likely it is for the estimate $\overline{h}$ to deviate from the hallucination probability $h$ by more than a given threshold $\epsilon$, as stated in the following proposition.
\begin{proposition}\label{cor:hallucination_bound}
Let $h$ denote the true (but unknown) hallucination probability of the \textsc{llm} and let $\overline{h}$ denote the empirical probability based on $L$ samples. We have
\begin{align*}
P(h \geq \overline{h} + \epsilon) \leq e^{-2\epsilon^2L},
\end{align*}
where $\epsilon > 0$ is a configurable parameter.  
\end{proposition}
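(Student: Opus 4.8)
The plan is to model the hallucination status of each of the $L$ sampled actions as an independent Bernoulli trial and then invoke Hoeffding's inequality in its one-sided form. Concretely, I would define indicator random variables $X_1,\dots,X_L$, where $X_i=1$ if the $i$-th sampled action is hallucinated (per \defref{def:hallucination}) and $X_i=0$ otherwise. Because the $L$ candidate actions are drawn independently from $p_{\bm{\theta}^{\prime}}$ conditioned on the same fixed input, the $X_i$ are i.i.d. Bernoulli variables with common mean $\mathbb{E}[X_i]=h$, each taking values in the bounded interval $[0,1]$. The empirical estimate is the sample mean $\overline{h}=\frac{1}{L}\sum_{i=1}^{L}X_i$, so that $\mathbb{E}[\overline{h}]=h$.

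First I would rewrite the target event in a form amenable to a concentration bound. Observe that
\begin{align*}
P(h \geq \overline{h} + \epsilon) = P(\overline{h} - h \leq -\epsilon) = P\left(\sum_{i=1}^{L}(X_i - h) \leq -L\epsilon\right).
\end{align*}
This is precisely a lower-tail deviation of the sum $S = \sum_{i=1}^{L} X_i$ below its mean $\mathbb{E}[S]=Lh$. Next I would apply the one-sided Hoeffding inequality: for independent variables $X_i \in [a_i,b_i]$ and any $t>0$,
\begin{align*}
P\left(S - \mathbb{E}[S] \leq -t\right) \leq \exp\left(-\frac{2t^2}{\sum_{i=1}^{L}(b_i-a_i)^2}\right).
\end{align*}
With $a_i=0$ and $b_i=1$, we have $(b_i-a_i)^2=1$ and hence $\sum_{i=1}^{L}(b_i-a_i)^2 = L$. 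Substituting $t = L\epsilon$, the right-hand side becomes $\exp\!\left(-2L^2\epsilon^2/L\right)=\exp(-2\epsilon^2 L)$, which is exactly the claimed bound.

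There is no substantive analytical obstacle here: the result is a direct instance of Hoeffding's inequality, and the boundedness of the $X_i$ in $[0,1]$ is automatic because hallucination is a binary event, so no additional integrability assumption is needed. The only points requiring care are (i) justifying the i.i.d. assumption, which follows from the fact that the $L$ candidate actions are sampled independently from the fine-tuned model $p_{\bm{\theta}^{\prime}}$ conditioned on the same prompt, so the hallucination indicators are identically distributed and mutually independent; and (ii) orienting the one-sided tail correctly, since the proposition bounds the probability that the \emph{true} $h$ exceeds the estimate $\overline{h}$, which corresponds to the \emph{lower} tail of the sample mean rather than the upper tail. I would make the latter explicit in the first display so that the direction of the deviation is unambiguous before the inequality is invoked.
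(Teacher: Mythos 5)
Your proof is correct and follows essentially the same route as the paper: model the $L$ verified samples as i.i.d.\ Bernoulli indicators of hallucination and apply the one-sided Hoeffding inequality to the sample mean $\overline{h}$. The paper's own proof is terser (it invokes Hoeffding directly without spelling out the tail orientation or the bound $\sum_i(b_i-a_i)^2=L$), so your added care in identifying the event as a lower-tail deviation is a welcome but not substantively different elaboration.
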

\begin{proof}
We model the process of generating $L$ actions and verifying which of them are hallucinated as $L$ independent and identically distributed Bernoulli trials, represented by the random variables $X_1,X_2,\hdots,X_L$. We have $X_i=1$ if the $i$th sampled action is hallucinated; $X_i=0$ otherwise. Hence, $\overline{h} = \frac{1}{L}\sum_{i=1}^LX_i$. Applying Hoeffding's inequality, we have
\begin{align*}
P(h \geq  \overline{h}+ \epsilon) \leq e^{-2\epsilon^2L}.
\end{align*}
\end{proof}
This proposition implies that the probability that all actions in the set $\mathcal{A}^N_t$ [cf.~(\ref{eq:rollout})] are hallucinated (i.e., $h^N$) can be controlled with a certain confidence when the conditions of \propref{theorem:hallucination_bound} hold by increasing $N$. Moreover, the confidence increases exponentially with the number of samples $L$ used to estimate the hallucination probability, as shown in \figref{fig:bound_2}.
%\tikzexternaldisable
\begin{figure}[H]
  \centering
%\tikzsetnextfilename{bound_3}    
  \scalebox{0.77}{
   \input{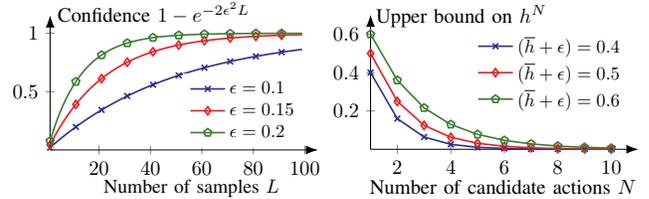}        
  }
  \caption{Illustration of Prop. \ref{cor:hallucination_bound}. Here $L$ is the number of samples for estimating the hallucination probability and $N$ is the number of candidate actions; cf.~(\ref{eq:rollout}).}  
  \label{fig:bound_2}
\end{figure}
%\tikzexternalenable

\begin{table*}
  \centering
  \scalebox{0.78}{
    \begin{tabular}{llll} \toprule
\rowcolor{lightgray}
      {\textit{Dataset}} & {\textit{System}} & {\textit{Attacks}} & {\textit{Logs}}\\ \midrule
      CTU-Malware-2014 \cite{GARCIA2014100} & \textsc{windows xp sp2} servers & Various malwares and ransomwares, e.g., \textsc{cryptodefense} \cite{8418627}. & \textsc{snort} alerts \cite{snort} \\
      CIC-IDS-2017 \cite{icissp18} & \textsc{windows} and \textsc{linux} servers & Denial-of-service, web attacks, heartbleed, \textsc{sql} injection, etc. & \textsc{snort} alerts \cite{snort} \\
      AIT-IDS-V2-2022 \cite{ait_ids_1} & \textsc{linux} and \textsc{windows} servers/hosts & Multi-stage attack with reconnaissance, cracking, and escalation. & \textsc{wazuh} alerts \cite{wazuh} \\
      CSLE-IDS-2024 \cite{dsn24_hammar_stadler} & \textsc{linux} servers & \textsc{sambacry}, \textsc{shellshock}, exploit of \textsc{cve-2015-1427}, etc. & \textsc{snort} alerts \cite{snort}\\
    \bottomrule\\
  \end{tabular}}
  \caption{The datasets of cyberattacks and logs used for the experimental evaluation.}\label{tab:dataset_types}
\end{table*}
\section{Summary of Our Method}
In summary, our method for using an \textsc{llm} as decision support during incident handling consists of three main steps:
\begin{enumerate}
\item \textit{Offline instruction fine-tuning of a lightweight \textsc{llm}}.
  \begin{itemize}
  \item We fine-tune the \textsc{llm} via supervised learning on a dataset of logs from $68,000$ incidents paired with response plans and chain-of-thought reasoning steps.
  \end{itemize}
\item \textit{Online information retrieval}.
  \begin{itemize}
  \item Before prompting the \textsc{llm} with system logs to generate candidate response actions, we enrich the logs with threat intelligence retrieved from external sources.
  \end{itemize}
\item \textit{Online lookahead planning via \myalgref{alg:our_method}}.
  \begin{itemize}
  \item Instead of directly executing the action generated by the fine-tuned \textsc{llm}, we generate several candidate actions and select the one that leads to the shortest predicted recovery time, which reduces the probability of hallucinations under certain conditions; cf.~\propref{theorem:hallucination_bound}.
  \end{itemize}  
\end{enumerate}

\section{Experimental Evaluation of Our Method}
In this section, we present an experimental evaluation of our method. We start by comparing its performance with that of frontier \textsc{llm}s based on log data from incidents reported in the literature. We then compare the performance of our method with that of a popular reinforcement learning method, namely proximal policy optimization (\textsc{ppo}) \cite{ppo}. Our main evaluation metric is the recovery time $T$, as defined in \defref{def:recovery_time}\footnote{To penalize unnecessary responses in the evaluation, we increment the recovery time by two instead of one if an action includes unnecessary steps.}.

We instantiate our method with the \textsc{deepseek-r1-14b} \textsc{llm} \cite{deepseekai2025deepseekr1incentivizingreasoningcapability}, which we fine-tune using the procedure described in \sectionref{sec:fine_tuning}. Further, we implement the \textsc{rag} pipeline described in \sectionref{section:rag} using the open threat exchange (\textsc{otx}) \textsc{api} \cite{alienvaultOTX}. Finally, we instantiate the planning procedure described in \myalgref{alg:our_method} with $N=3$ candidate actions and $M=3$ samples; cf. (\ref{eq:rollout}). Additional experimental details and hyperparameters are provided in Appendices \ref{appendix:hyperparameters}--\ref{app:example_dataset}. We provide source code, model parameters, and a video demonstration of our method in \cite{llm_source_kim}.
\subsection{Comparison with Frontier \textsc{llm}s}
We compare our method with three frontier \textsc{llm}s: \textsc{deepseek-r1} \cite{deepseekai2025deepseekr1incentivizingreasoningcapability}, \textsc{gemini 2.5} \cite{comanici2025gemini25pushingfrontier}, and \textsc{openai o3} \cite{openai2024gpt4technicalreport}. Compared to these models, the main difference is that our method is significantly more lightweight; see \tableref{tab:parameter_counts}.
\begin{table}[H]
  \centering
  \scalebox{0.7}{
    \begin{tabular}{lll} \toprule
\rowcolor{lightgray}
      {\textit{Method}} & {\textit{Number of parameters}} & {\textit{Context window size}}  \\ \midrule
      \rowcolor{lightblue}
      \textsc{our method} & $14$ billion & $128,000$  \\
      \textsc{deepseek-r1} \cite{deepseekai2025deepseekr1incentivizingreasoningcapability} & $671$ billion \cite {deepseekai2025deepseekr1incentivizingreasoningcapability} & $128,000$\\      
      \textsc{gemini 2.5} \cite{comanici2025gemini25pushingfrontier} & unknown ($\geq 100$ billion)  & $1$ million\\
      \textsc{openai o3} \cite{openai2024gpt4technicalreport} & unknown ($\geq 100$ billion) & $200,000$\\
    \bottomrule\\
  \end{tabular}}
  \caption{Comparison between our method and frontier \textsc{llm}s in terms of the number of model parameters and context window size.}\label{tab:parameter_counts}
\end{table}

\vspace{2mm}
\noindent\textit{\textbf{Evaluation datasets.}}
The evaluation is based on log data from $25$ incidents across $4$ different datasets published in the literature, namely CTU-Malware-2014 \cite{GARCIA2014100}, CIC-IDS-2017 \cite{icissp18}, AIT-IDS-V2-2022 \cite{ait_ids_1}, and CSLE-IDS-2024 \cite{dsn24_hammar_stadler}; see \tableref{tab:dataset_types} and \figref{fig:eval_dataset}. We also include $5$ false-positive incidents. Each incident contains log data and a brief system description. Given this data, the task of the \textsc{llm} is to generate effective response actions, which we compare against the ground truth. The prompt templates and the datasets are available in \cite{llm_source_kim}.

We provide a (condensed) example of an incident and the ground truth response from the CTU-Malware-2014 dataset \cite{GARCIA2014100} below. In this example, the (ground truth) response plan consists of $T=6$ response actions. Hence, the shortest possible recovery time an \textsc{llm} can achieve when evaluated on this example is $6$. However, if the \textsc{llm} generates a plan that includes unnecessary response actions, then the recovery time will be longer than $6$. It is also possible that the generated actions fail to fully recover the system from the incident. We report such cases separately in the evaluation results.

%\tikzexternaldisable
\begin{tcolorbox}[colback=lightgray!20, colframe=lightgray!30, sharp corners, boxrule=0pt, title=\small \textcolor{black}{\underline{Example incident from the CTU-Malware-2014 dataset \cite{GARCIA2014100}}.}]
\footnotesize  \textbf{System description (condensed)}: Two subnetworks (A and B) are connected via a switch that is also connected to the Internet. All servers run \textsc{windows xp sp2}. Their \textsc{ip}s and configurations are...\\
  
\footnotesize \textbf{Snort alert logs (condensed)}:
\vspace{-0.2cm}
\begin{verbatim}
[120:3:2] (http_inspect) NO CONTENT-LENGTH..
[1:31033:6] MALWARE Win.Trojan.Cryptodefence..
{TCP} 147.32.84.165:1057 -> 222.88.205.195:443
[129:5:1] Bad segment, adjusted size..
[139:1:1] (spp_sdf) SDF..
\end{verbatim}
%\end{footnotesize}
\footnotesize \textbf{Incident summary}: Server 147.32.84.165 is infected with the \textsc{win.trojan.cryptodefence} ransomware. Alerts show the server is making outbound command and control (\textsc{c2}) connections to 222.88.205.195. This indicates that the ransomware is active and may be preparing to encrypt files or has already begun doing so.\\

\textbf{Response actions (condensed)}:

\footnotesize 1. Disconnect the Ethernet cable of the infected server at 147.32.84.165 to sever its network connection. Concurrently, configure a rule on the main switch/firewall to block all outbound traffic to the \textsc{c2} server 222.88.205.195.\\

2. Analyze the central switch to scan all network traffic from both subnetworks A and B for any other hosts attempting to make connections to the malicious \textsc{ip} 222.88.205.195.\\

3. Before altering the infected server, create a complete bit-for-bit forensic image of its hard drive. This preserves the ransomware executable, encrypted files, and other evidence for future analysis.\\

4. Wipe the hard drive of 147.32.84.165. If other infected machines were discovered, they must also be taken offline and wiped.\\

5. Upgrade all servers from \textsc{windows xp sp2} (which is obsolete) to a modern operating system that receives security patches.\\

6. Restore the server's data from a trusted backup. Once the server is rebuilt with a modern operating system, reconnect it to the network and closely monitor for any anomalous activity.
\normalfont
\end{tcolorbox}
%\tikzexternalenable

%\tikzexternaldisable
\begin{figure}
  \centering
%\tikzsetnextfilename{eval_dataset}      
  \scalebox{0.9}{
    \begin{tikzpicture}
  \node[scale=1] (box) at (0,5.4) {
  \begin{tikzpicture}
    \definecolor{grad1}{RGB}{0,150,150}
    \definecolor{grad2}{RGB}{0,100,200}
    \definecolor{grad3}{RGB}{80,0,150}
    \definecolor{grad4}{RGB}{255,100,100}

    \node[scale=0.8] (box) at (0,0) {
    \begin{tikzpicture}

    \node[inner sep=0pt,align=center,scale=1,color=black] at (-0.7,  0.00) {\textsc{impact}};
    \draw[fill=bluethree,draw=black,line width=0.23mm] (0,-0.15) rectangle (7.5, 0.15);
    \node[inner sep=0pt,align=center,scale=0.9,color=black]  at (3.75,  0.00) {5};

    \draw[fill=bluethree,draw=black,line width=0.23mm] (0,-0.25) rectangle (6,-0.55);
    \node[inner sep=0pt,align=center,scale=0.9,color=black]  at (3,-0.40) {4};
    \node[inner sep=0pt,align=center,scale=1,color=black]     at (-1.38,-0.40) {\textsc{initial access}};

    \draw[fill=bluethree,draw=black,line width=0.23mm]            (0,-0.65) rectangle (6,-0.95);
    \node[inner sep=0pt,align=center,scale=0.9,color=black] at (3,-0.80) {4};
    \node[inner sep=0pt,align=center,scale=1,color=black]   at (-2.08,-0.80) {\textsc{command and control}};

    \draw[fill=bluethree,draw=black,line width=0.23mm]            (0,-1.05) rectangle (4.5,-1.35);
    \node[inner sep=0pt,align=center,scale=0.9,color=black] at (2.25,-1.20) {3};
    \node[inner sep=0pt,align=center,scale=1,color=black]   at (-1.07,-1.20) {\textsc{execution}};

    \draw[fill=bluethree,draw=black,line width=0.23mm] (0,-1.45) rectangle (4.5,-1.75);
    \node[inner sep=0pt,align=center,scale=0.9,color=black]   at (2.25,-1.60) {3};
    \node[inner sep=0pt,align=center,scale=1,color=black]     at (-1.15,-1.60) {\textsc{collection}};

    \draw[fill=bluethree,draw=black,line width=0.23mm] (0,-1.85) rectangle (4.5,-2.15);
    \node[inner sep=0pt,align=center,scale=0.9,color=black]   at (2.25,-2.00) {3};
    \node[inner sep=0pt,align=center,scale=1,color=black]     at (-1.78,-2.00) {\textsc{lateral movement}};

    \draw[fill=bluethree,draw=black,line width=0.23mm] (0,-2.25) rectangle (3,-2.55);
    \node[inner sep=0pt,align=center,scale=0.9,color=black]   at (1.5,-2.40) {2};
    \node[inner sep=0pt,align=center,scale=1,color=black]     at (-1.95,-2.40) {\textsc{privilege escalation}};

    \draw[fill=bluethree,draw=black,line width=0.23mm] (0,-2.65) rectangle (3,-2.95);
    \node[inner sep=0pt,align=center,scale=0.9,color=black]   at (1.5,-2.80) {2};
    \node[inner sep=0pt,align=center,scale=1,color=black]     at (-1.25,-2.80) {\textsc{exfiltration}};

    \draw[fill=bluethree,draw=black,line width=0.23mm] (0,-3.05) rectangle (1.5,-3.35);
    \node[inner sep=0pt,align=center,scale=0.9,color=black]   at (0.75,-3.2) {1};
    \node[inner sep=0pt,align=center,scale=1,color=black]     at (-1.5,-3.2) {\textsc{reconnaissance}};

    \end{tikzpicture}
    };
  \end{tikzpicture}
  };
\end{tikzpicture}
  }
  \caption{Number of occurrences of different \textsc{mitre att\&ck tactics} \cite{strom2018mitre} among the incidents in the evaluation datasets.}
  \label{fig:eval_dataset}
\end{figure}
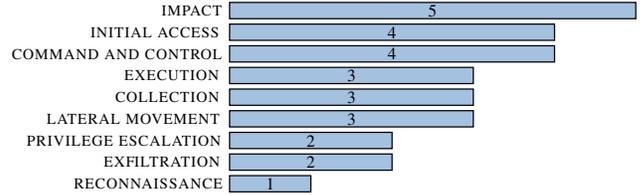
%\tikzexternalenable

\vspace{2mm}
\noindent\textit{\textbf{Evaluation results.}}
The results are summarized in \tableref{tab:evaluation_1} and \figref{fig:eval_bars_1}. Across all evaluation datasets, our method achieves the shortest average recovery time. On average, the recovery time of our method is $13.46$ compared to $16.21$ for the next best method. Among the frontier \textsc{llm}s, we observe that \textsc{gemini 2.5} performs best on average, whereas the difference between \textsc{openai o3} and \textsc{deepseek-r1} is not statistically significant.

%\tikzexternaldisable
\begin{figure}
  \centering
%\tikzsetnextfilename{eval_bars_1}        
  \scalebox{0.81}{
   \input{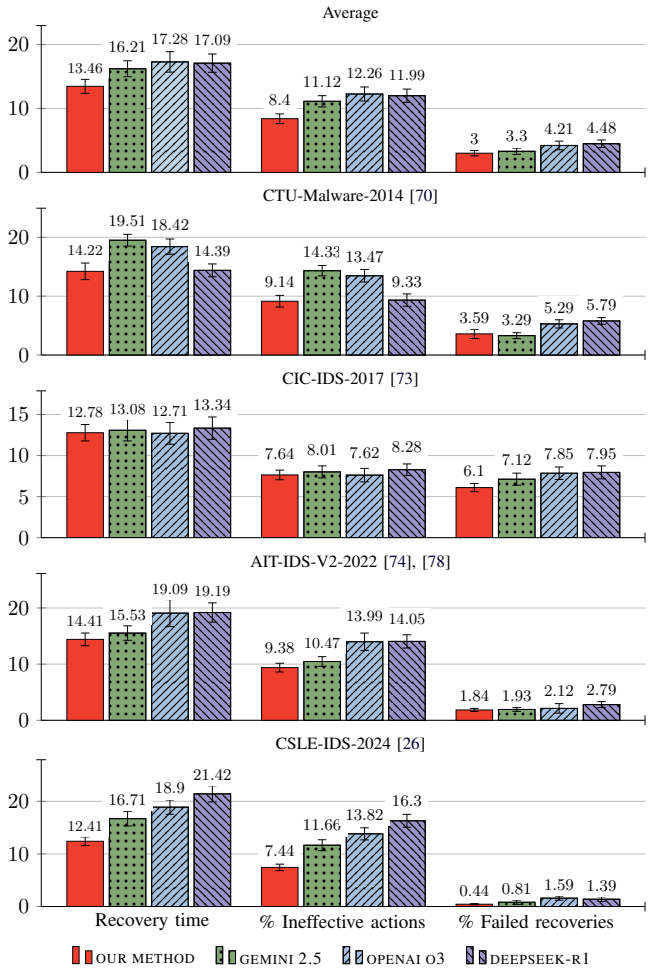}            
  }
  \caption{Evaluation results ($\downarrow$ better): comparison between our method and frontier \textsc{llm}s. Bar colors relate to different methods; bar groups indicate performance metrics; numbers and error bars indicate the mean and the standard deviation from $5$ evaluations with different random seeds.}
  \label{fig:eval_bars_1}
\end{figure}
%\tikzexternalenable

%
\begin{table}
  \centering
  \scalebox{0.7}{
    \begin{tabular}{llll} \toprule
\rowcolor{lightgray}
      \textit{Method} & \textit{Recovery time} & \textit{\% Ineffective actions} & \textit{\% Failed recovery} \\ \midrule
      \multicolumn{4}{c}{\textbf{Average}} \\\midrule
      \rowcolor{lightblue}
      \textsc{our method} & $\bm{13.46 \pm 1.09}$ & $\bm{8.40 \pm 0.75}$ & $\bm{3.00 \pm 0.41}$\\
      \textsc{gemini 2.5} \cite{comanici2025gemini25pushingfrontier} & $16.21 \pm 1.25$ & $11.12 \pm 0.88$ & $3.30 \pm 0.49$\\
      \textsc{openai o3} \cite{openai2024gpt4technicalreport} & $17.28 \pm 1.60$ & $12.26 \pm 1.10$ & $4.21 \pm 0.68$\\
      \textsc{deepseek-r1} \cite{deepseekai2025deepseekr1incentivizingreasoningcapability} & $17.09 \pm 1.43$ & $11.99 \pm 1.04$ & $4.48 \pm 0.59$\\
      & \\
      \multicolumn{4}{c}{\textbf{CTU-Malware-2014 \cite{GARCIA2014100}}} \\\midrule
      \rowcolor{lightblue}
      \textsc{our method} & $\bm{14.22 \pm 1.41}$ & $\bm{9.14 \pm 0.99}$ & $3.59 \pm 0.78$\\
      \textsc{gemini 2.5} \cite{comanici2025gemini25pushingfrontier} & $19.51 \pm 1.00$ & $14.33 \pm 0.88$ & $\bm{3.29 \pm 0.50}$\\
      \textsc{openai o3} \cite{openai2024gpt4technicalreport} & $18.42 \pm 1.30$ & $13.47 \pm 1.07$ & $5.29 \pm 0.70$\\
      \textsc{deepseek-r1} \cite{deepseekai2025deepseekr1incentivizingreasoningcapability} & $14.39 \pm 1.09$ & $9.33 \pm 1.05$ & $5.79 \pm 0.57$\\
      & \\
      \multicolumn{4}{c}{\textbf{CIC-IDS-2017 \cite{icissp18}}} \\\midrule
      \rowcolor{lightblue}
      \textsc{our method} & $12.78 \pm 1.00$ & $7.64 \pm 0.59$ & $\bm{6.11 \pm 0.50}$\\
      \textsc{gemini 2.5} \cite{comanici2025gemini25pushingfrontier} & $13.08 \pm 1.30$ & $8.01 \pm 0.73$ & $7.13 \pm 0.74$\\
      \textsc{openai o3} \cite{openai2024gpt4technicalreport} & $\bm{12.71 \pm 1.33}$ & $\bm{7.62 \pm 0.82}$ & $7.86 \pm 0.76$\\
      \textsc{deepseek-r1} \cite{deepseekai2025deepseekr1incentivizingreasoningcapability} & $13.34 \pm 1.36$ & $8.28 \pm 0.70$ & $7.95 \pm 0.79$\\
                        & \\
      \multicolumn{4}{c}{\textbf{AIT-IDS-V2-2022 \cite{ait_ids_1,9866880}}} \\\midrule
      \rowcolor{lightblue}
      \textsc{our method} & $\bm{14.41 \pm 1.13}$ & $\bm{9.38 \pm 0.78}$ & $\bm{1.84 \pm 0.26}$\\
      \textsc{gemini 2.5} \cite{comanici2025gemini25pushingfrontier} & $15.53 \pm 1.29$ & $10.47 \pm 0.86$ & $1.94 \pm 0.33$\\
      \textsc{openai o3} \cite{openai2024gpt4technicalreport} & $19.09 \pm 2.40$ & $13.99 \pm 1.54$ & $2.12 \pm 0.85$\\
      \textsc{deepseek-r1} \cite{deepseekai2025deepseekr1incentivizingreasoningcapability} & $19.19 \pm 1.72$ & $14.05 \pm 1.18$ & $2.79 \pm 0.54$\\
                        & \\
      \multicolumn{4}{c}{\textbf{CSLE-IDS-2024 \cite{dsn24_hammar_stadler}}} \\\midrule
      \rowcolor{lightblue}
      \textsc{our method} & $\bm{12.41 \pm 0.82}$ & $\bm{7.44 \pm 0.62}$ & $\bm{0.44 \pm 0.11}$\\
      \textsc{gemini 2.5} \cite{comanici2025gemini25pushingfrontier} & $16.71 \pm 1.40$ & $11.66 \pm 1.04$ & $0.81 \pm 0.40$\\
      \textsc{openai o3} \cite{openai2024gpt4technicalreport} & $18.90 \pm 1.36$ & $13.82 \pm 1.15$ & $1.59 \pm 0.39$\\
      \textsc{deepseek-r1} \cite{deepseekai2025deepseekr1incentivizingreasoningcapability} & $21.42 \pm 1.53$ & $16.30 \pm 1.23$ & $1.39 \pm 0.45$\\
    \bottomrule\\
  \end{tabular}}
  \caption{Evaluation results: comparison between our method and frontier \textsc{llm}s. Rows relate to different methods; columns indicate performance metrics ($\downarrow$ better); blue rows relate to our method (see \figref{fig:framework}); the best results are highlighted in bold; numbers indicate the mean and the standard deviation from $5$ evaluations with different random seeds.}
  \label{tab:evaluation_1}
\end{table}

%\tikzexternaldisable
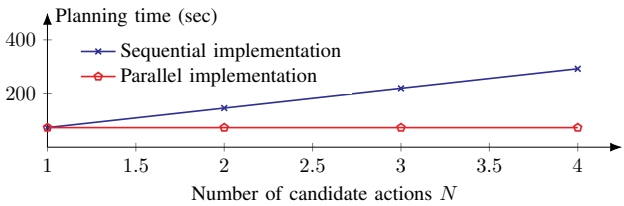
\begin{figure}
  \centering
%\tikzsetnextfilename{planning_scale}      
  \scalebox{0.77}{
   \begin{tikzpicture}

\pgfplotstableread{
1 73
2 146
3 219
4 292
}\sequential

\pgfplotstableread{
1 73
2 73
3 73
4 73
}\parallel

\pgfplotsset{/dummy/workaround/.style={/pgfplots/axis on top}}

\node[scale=1] (kth_cr) at (0,0)
{
\begin{tikzpicture}
  \begin{axis}
[
        xmin=1,
        xmax=4.25,
        width=11.5cm,
        height=3.9cm,
        ymax=500,
        ymin=0,
        axis y line=center,
        axis x line=bottom,
        scaled y ticks=false,
        yticklabel style={
        /pgf/number format/fixed,
        /pgf/number format/precision=5
      },
        xlabel style={below right},
        ylabel style={above left},
        axis line style={-{Latex[length=2mm]}},
        legend style={at={(0.53,0.84)}},
        legend columns=1,
        legend style={
          draw=none,
      anchor=north east,
      /tikz/every even column/.style={anchor=west, column sep=5pt},
      /tikz/every odd column/.style={anchor=west},          
            /tikz/column 2/.style={
                column sep=5pt,
              }
              }
              ]
            \addplot[Blue,name path=l1, thick, mark=x, mark repeat=1] table [x index=0, y index=1, domain=0:1] {\sequential};
            \addplot[Red,name path=l1, thick, mark=pentagon, mark repeat=1] table [x index=0, y index=1, domain=0:1] {\parallel};
            \legend{Sequential implementation, Parallel implementation}
            \end{axis}
\node[inner sep=0pt,align=center, scale=1, rotate=0, opacity=1] (obs) at (1.6,2.25)
{
Planning time (sec)
};
\node[inner sep=0pt,align=center, scale=1, rotate=0, opacity=1] (obs) at (4.85,-0.8)
{
  Number of candidate actions $N$
};
\end{tikzpicture}
};

\end{tikzpicture}        
  }
  \caption{Time required (per time step) to execute \myalgref{alg:our_method} for varying number of candidate actions $N$. The average planning times were computed based on $5$ executions with \textsc{rtx 8000} \textsc{gpu}s.}
  \label{fig:planning_scale}
\end{figure}
%\tikzexternalenable

\vspace{2mm}

\noindent\textit{\textbf{Scalability analysis.}} \Figref{fig:planning_scale} shows the compute time per time step of \myalgref{alg:our_method} for varying number of candidate actions $N$. We observe that the planning time increases linearly with $N$ when the actions are evaluated sequentially. However, by parallelizing the computation across multiple \textsc{gpu}s, the planning time remains nearly constant as $N$ increases.

\vspace{2mm}

\noindent\textit{\textbf{Hallucination analysis.}} \Figref{fig:empirical_hallucination} shows the empirical hallucination probability of our method based on $L=30$ response actions sampled from the \textsc{llm} when prompted with log data from the evaluation datasets. The figure also shows the theoretical upper bound expressed in \propref{cor:hallucination_bound}.
%\tikzexternaldisable
\begin{figure}[H]
  \centering
%\tikzsetnextfilename{empirical_hallucination}    
  \scalebox{0.77}{
   \begin{tikzpicture}

\pgfplotstableread{
1 0.14
2 0.1
3 0.06
4 0.0
}\empirical

\pgfplotstableread{
1 0.4228725924844031
2 0.17395886667609692
3 0.07253435695589375
4 0.030249866009240515
}\bound

\pgfplotsset{/dummy/workaround/.style={/pgfplots/axis on top}}

\node[scale=1] (kth_cr) at (0,0)
{
\begin{tikzpicture}
  \begin{axis}
[
        xmin=1,
        xmax=4.3,
        width=11.5cm,
        height=3.9cm,
        ymax=0.5,
        ymin=0,
        axis y line=center,
        axis x line=bottom,
        scaled y ticks=false,
        yticklabel style={
        /pgf/number format/fixed,
        /pgf/number format/precision=5
      },
        xlabel style={below right},
        ylabel style={above left},
        axis line style={-{Latex[length=2mm]}},
        legend style={at={(1,0.88)}},
        legend columns=1,
        legend style={
          draw=none,
      anchor=north east,
      /tikz/every even column/.style={anchor=west, column sep=5pt},
      /tikz/every odd column/.style={anchor=west},          
            /tikz/column 2/.style={
                column sep=5pt,
              }
              }
              ]
            \addplot[Blue,name path=l1, thick, mark=x, mark repeat=1] table [x index=0, y index=1, domain=0:1] {\empirical};
            \addplot[Red,name path=l1, thick, mark=pentagon, mark repeat=1] table [x index=0, y index=1, domain=0:1] {\bound};
            \legend{Empirical ($L=30$ samples), Theoretical upper bound (confidence $0.99$)}
            \end{axis}
\node[inner sep=0pt,align=center, scale=1, rotate=0, opacity=1] (obs) at (2,2.3)
{
Hallucination probability 
};
\node[inner sep=0pt,align=center, scale=1, rotate=0, opacity=1] (obs) at (4.85,-0.65)
{
  Number of candidate actions $N$
};
\end{tikzpicture}
};
\end{tikzpicture}    
  }
  \caption{The empirical hallucination probability of our method for varying number of candidate actions $N$, as well as the theoretical upper bound on the hallucination probability $h^N$ (assuming the conditions of \propref{theorem:hallucination_bound} hold) with confidence $0.99$, i.e., the right-hand side of the bound in \propref{cor:hallucination_bound} is $0.01$.}
  \label{fig:empirical_hallucination}
\end{figure}
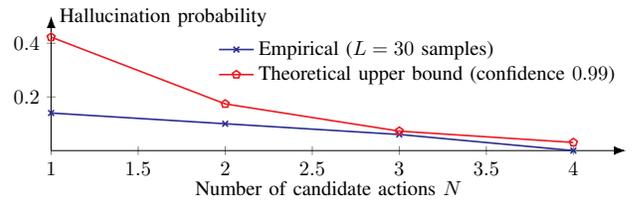
%\tikzexternalenable

We observe in \figref{fig:empirical_hallucination} that the theoretical bound holds uniformly over the empirical probabilities. However, while the theoretical bound decreases exponentially with the number of candidate actions $N$, the empirical hallucination probability exhibits an approximately linear decline. This discrepancy suggests that the conditions of \propref{theorem:hallucination_bound} are not fully satisfied. This is expected, as we used only $M=3$ samples to estimate the expected recovery times. We chose this value of $M$ to ensure that the planning could be completed within a reasonable time using our limited hardware (\textsc{quadro rtx 8000} \textsc{gpus}). Increasing $M$ is likely to yield more accurate estimates and further reduce the probability of hallucinations.

\vspace{2mm}
\noindent\textit{\textbf{Ablation study.}}
To evaluate the relative importance of each step of our method (i.e., fine-tuning, \textsc{rag}, and planning), we evaluate our method with and without each step. The results are summarized in \tableref{tab:ablation_study_results} and \figref{fig:eval_bars_2}. We observe consistent performance degradations when each step is removed. The most substantial degradation occurs when fine-tuning is removed, which causes the average recovery time to increase from $13$ to $25$. Planning also has a significant impact. Without planning, the average recovery time jumps from $13$ to $21$. Retrieval augmented generation (\textsc{rag}) contributes as well, though its effects on the performance are more modest.

%\tikzexternaldisable
\begin{figure}[H]
  \centering
%\tikzsetnextfilename{eval_bars_2}
  \scalebox{0.81}{
   \input{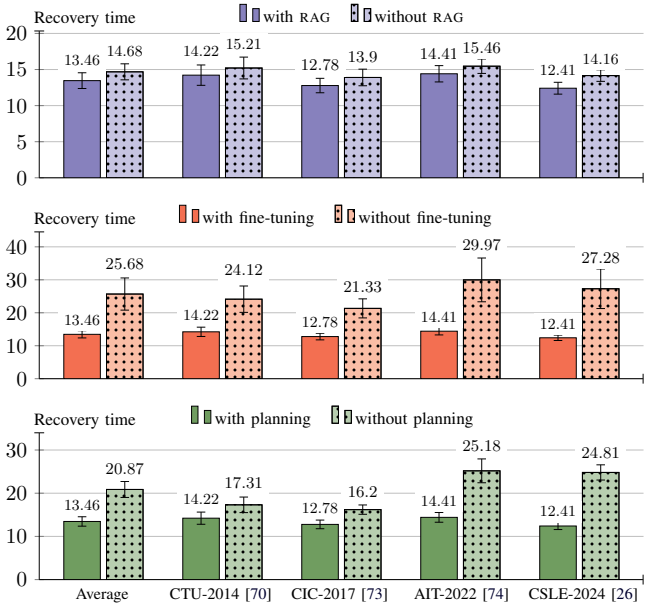}            
  }
  \caption{Ablation-study results for the recovery time metric ($\downarrow$ better). Bars relate to our method with and without different steps; bar groups indicate the evaluation dataset; numbers and error bars indicate the mean and the standard deviation from $5$ evaluations with different random seeds.}
  \label{fig:eval_bars_2}
\end{figure}
%\tikzexternalenable

\begin{table}
  \centering
  \scalebox{0.7}{
    \begin{tabular}{llll} \toprule
\rowcolor{lightgray}
      \textit{Method} & \textit{Recovery time} & \textit{\% Ineffective actions} & \textit{\% Failed recovery} \\ \midrule
      \multicolumn{4}{c}{\textbf{Average}} \\\midrule
      \rowcolor{lightblue}
      \textsc{our method} & $\bm{13.46 \pm 1.09}$ & $\bm{8.40 \pm 0.75}$ & $\bm{3.00 \pm 0.41}$\\
      without \textsc{rag} & $14.68 \pm 1.11$ & $9.58 \pm 0.96$ & $4.13 \pm 0.44$\\
      without fine-tuning & $25.68 \pm 4.88$ & $20.48 \pm 3.51$ & $12.64 \pm 2.42$\\
      without planning & $20.87 \pm 1.85$ & $15.91 \pm 1.79$ & $7.75 \pm 0.79$\\
      & \\
      \multicolumn{4}{c}{\textbf{CTU-Malware-2014 \cite{GARCIA2014100}}} \\\midrule
      \rowcolor{lightblue}
      \textsc{our method} & $\bm{14.22 \pm 1.41}$ & $\bm{9.14 \pm 0.99}$ & $\bm{3.59 \pm 0.78}$\\
      without \textsc{rag} & $15.21 \pm 1.51$ & $10.20 \pm 1.27$ & $5.90 \pm 0.84$\\
      without fine-tuning & $24.12 \pm 4.00$ & $19.19 \pm 2.87$ & $12.12 \pm 2.50$\\
      without planning & $17.31 \pm 1.80$ & $12.35 \pm 1.61$ & $10.93 \pm 1.03$\\
      & \\
      \multicolumn{4}{c}{\textbf{CIC-IDS-2017 \cite{icissp18}}} \\\midrule
      \rowcolor{lightblue}
      \textsc{our method} & $\bm{12.78 \pm 1.00}$ & $\bm{7.64 \pm 0.59}$ & $\bm{6.11 \pm 0.50}$\\
      without \textsc{rag} & $13.90 \pm 1.15$ & $8.67 \pm 0.88$ & $7.71 \pm 0.61$\\
      without fine-tuning & $21.33 \pm 2.89$ & $16.08 \pm 2.26$ & $13.87 \pm 1.88$\\
      without planning & $16.20 \pm 1.08$ & $11.64 \pm 1.15$ & $9.86 \pm 0.56$\\
                        & \\
      \multicolumn{4}{c}{\textbf{AIT-IDS-V2-2022 \cite{ait_ids_1,9866880}}} \\\midrule
      \rowcolor{lightblue}
      \textsc{our method} & $\bm{14.41 \pm 1.13}$ & $\bm{9.38 \pm 0.78}$ & $\bm{1.84 \pm 0.26}$\\
      without \textsc{rag} & $15.46 \pm 1.00$ & $10.33 \pm 0.87$ & $2.16 \pm 0.20$\\
      without fine-tuning & $29.97 \pm 6.64$ & $24.52 \pm 4.35$ & $14.98 \pm 2.83$\\
      without planning & $25.18 \pm 2.76$ & $20.01 \pm 2.47$ & $5.39 \pm 1.02$\\
                        & \\
      \multicolumn{4}{c}{\textbf{CSLE-IDS-2024 \cite{dsn24_hammar_stadler}}} \\\midrule
      \rowcolor{lightblue}
      \textsc{our method} & $\bm{12.41 \pm 0.82}$ & $\bm{7.44 \pm 0.62}$ & $\bm{0.44 \pm 0.11}$\\
      without \textsc{rag} & $14.16 \pm 0.80$ & $9.09 \pm 0.84$ & $0.76 \pm 0.10$\\
      without fine-tuning & $27.28 \pm 6.00$ & $22.12 \pm 4.56$ & $10.71 \pm 2.50$\\
      without planning & $24.81 \pm 1.76$ & $19.64 \pm 1.92$ & $4.82 \pm 0.54$\\
    \bottomrule\\
  \end{tabular}}
 \caption{Ablation-study results. Rows relate to different methods; columns indicate performance metrics ($\downarrow$ better); blue rows relate to our method (see \figref{fig:framework}); the best results are highlighted in bold; numbers indicate the mean and the standard deviation from $5$ evaluations with different random seeds.}
  \label{tab:ablation_study_results}
\end{table}
%\tikzexternaldisable
\begin{figure*}
  \centering
%\tikzsetnextfilename{cage_eval}    
  \scalebox{0.77}{
   \input{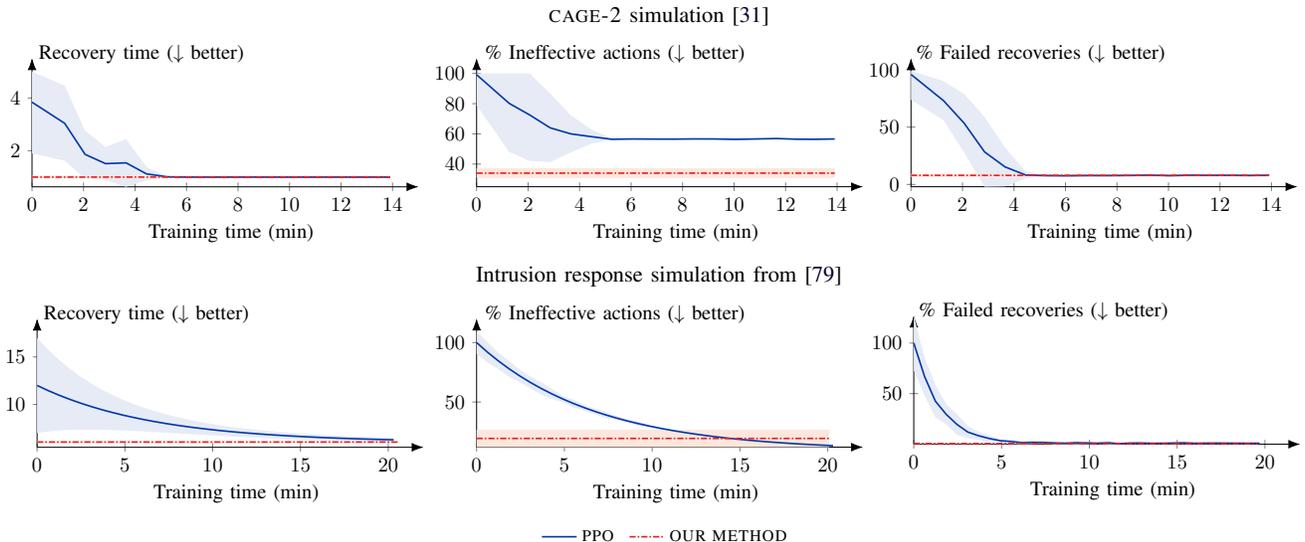}    
  }
  \caption{Comparison between our method (red curves) and the \textsc{ppo} reinforcement learning method (blue curves) \cite{ppo}. The first row of plots relates to the \textsc{cage-2} simulation \cite{cage_challenge_2_announcement} and the second row relates to the network intrusion simulation from \cite{hammar_stadler_tnsm}. Columns relate to different evaluation metrics ($\downarrow$ better). Curves show the mean value from evaluations with $5$ random seeds; shaded areas indicate standard deviations. The x-axes indicate the training time required by \textsc{ppo} for each simulation. In contrast, our method requires no incident-specific training.}
  \label{fig:cage_eval}
\end{figure*}
%\tikzexternalenable

\subsection{Comparison with Proximal Policy Optimization}
Numerous reinforcement learning approaches have been proposed for incident response, including policy optimization methods \cite{vyas2023automated}, tree search \cite{hammar2024optimaldefenderstrategiescage2}, stochastic approximation \cite{hammar_stadler_tnsm}, and Q-learning \cite{tabular_Q_andy}; see \cite{kim_phd_thesis} for an extensive review of the state of the art. Among these methods, variants of proximal policy optimization (\textsc{ppo}) \cite{ppo} dominate recent work. We therefore use \textsc{ppo} as a representative baseline for comparison. 

\vspace{2mm}

\noindent\textit{\textbf{Experiment setup.}} We evaluate our method against \textsc{ppo} on two simulated incidents: an advanced persistent threat from the \textsc{cage-2} simulation \cite{cage_challenge_2_announcement}, and a network intrusion scenario from \cite{hammar_stadler_tnsm}. The evaluation uses the same metrics as in the comparison with frontier \textsc{llm}s\footnote{To align the \textsc{cage-2} scenario with our evaluation metrics, we exclude decoy-related actions as they target prevention rather than response.}. The hyperparameters of \textsc{ppo} and the prompt templates that we use for the evaluation are available in \appendixref{appendix:hyperparameters} and \cite{llm_source_kim}, respectively.

\vspace{2mm}

\noindent\textit{\textbf{Evaluation results.}} The results are presented in \figref{fig:cage_eval}. Both our method and \textsc{ppo} achieve similar performance in terms of recovery time and failed recoveries across the two simulations. The only notable performance gap is in the percentage of ineffective actions for the \textsc{cage-2} simulation, where our method performs better. The key difference between our method and \textsc{ppo} lies in their training requirements: \textsc{ppo} requires incident-specific training (approximately $10$–$20$ minutes of training per incident) to reach good performance. In contrast, our method does not require such training to achieve good performance.

\subsection{Discussion of the Evaluation Results}
Our experimental results demonstrate a trade-off between generality, computational cost, and deployment practicality. Compared to frontier \textsc{llm}s, our method is significantly more lightweight, i.e., it requires fewer parameters and runs efficiently on commodity hardware, yet it achieves consistently better performance across all evaluation metrics. This performance advantage is primarily driven by our fine-tuning and planning steps, as shown in the ablation study.

When compared to reinforcement learning methods such as \textsc{ppo}, our method is more computationally costly at inference time due to the overhead of planning; see \figref{fig:planning_scale}. However, this overhead is offset by a major advantage: our method requires no incident-specific training. In contrast, \textsc{ppo} must be retrained for each new incident, which is impractical.

Compared to incident response playbooks \cite{playbook_response}, our method provides two clear advantages. First, it does not rely on domain experts for configuration. Second, it generates more precise and context-specific response actions. In particular, current playbooks often prescribe vague actions that are not directly executable \cite{10.1145/3491102.3517559, 10646756}. By contrast, our method produces executable response actions tailored to the system logs.

On the other hand, the main concern of our method compared to playbooks is the risk of hallucination. While our method reduces this risk through fine-tuning, information retrieval, and planning, it does not eliminate it entirely. Hence, response actions generated by our method should be subject to human validation before execution in most cases.

\vspace{2mm}

\noindent\textit{\textbf{Takeaways.}}
In summary, our main experimental findings are:
\begin{itemize}
    \item Our method consistently outperforms frontier \textsc{llm}s across all evaluation metrics, while being significantly more lightweight and able to run on commodity hardware.
    \item Fine-tuning and decision-theoretic planning are key drivers of performance, \textsc{rag} is less important.
    \item Compared to reinforcement learning methods, our method has higher overhead but avoids incident-specific training.
    \item In contrast to response playbooks, our method does not rely on domain experts for configuration and generates more precise and actionable response plans.
\end{itemize}
\section{Conclusion}
We introduce a novel method that enables the effective use of a large language model (\textsc{llm}) to provide decision support for incident response planning. Our method uses the \textsc{llm} for translating system logs into effective response plans while addressing its limitations through fine-tuning, information retrieval, and decision-theoretic planning. We prove that our method produces incident responses with bounded hallucination probability; see \propref{theorem:hallucination_bound} and \propref{cor:hallucination_bound}. Under certain assumptions, this bound can be made arbitrarily small at the expense of increased planning time. We evaluate our method on logs from incidents reported in the literature. The results show that our method a) achieves up to $22$\% shorter recovery times than frontier \textsc{llm}s and b) generalizes to a broad range of incident types and response actions.
%Motivated by these results, we see clear potential for \textsc{llm}-based decision-support systems for incident response.

\vspace{2mm}
\noindent\textit{\textbf{Future work.}} A primary direction for future work is to conduct evaluations in operational settings, where security operators use our method for decision support. Such studies would provide insights into the practical utility of our method and how to improve it further. From a theoretical standpoint, a promising direction of future work is to tighten the hallucination-probability bound stated in \propref{cor:hallucination_bound}. A possible approach to tighten this bound is to leverage conformal-abstention techniques \cite{yadkori2024mitigatingllmhallucinationsconformal}. Another direction for future work is to extend the system model in \sectionref{sec:decision_planning} to include additional performance metrics beyond recovery time. Moreover, due to the generality of our method, it is possible to extend our planning procedure [cf.~\myalgref{alg:our_method}] in many ways, e.g., by incorporating rollout techniques \cite{bertsekas2021rollout} or tree search \cite{hammar2024optimaldefenderstrategiescage2}. Similarly, the information-retrieval step of our method can be expanded to integrate information from several sources.

\section*{Acknowledgment}
This research is supported by the Swedish Research Council under contract 2024-06436.

\appendices

\section{Proof of \Propref{theorem:hallucination_bound}}\label{app:hallucination_bound_proof}
We start by noting that the planning problem in \sectionref{sec:decision_planning} can be viewed as a stochastic shortest path problem on the graph of recovery states, where the goal is to reach the state $\mathbf{s}_t=(1,1,1,1,1,1)$ as quickly as possible. Consequently, the problem is well-defined under standard assumptions, see e.g., \cite{10.5555/1396348} for details. The main approach for proving \propref{theorem:hallucination_bound} is to bound the difference in estimated recovery time of a non-hallucinated action and a hallucinated action. To this end, we start by stating and proving the following lemma.
\begin{lemma}\label{lemma:one}
Given the conditions of \propref{theorem:hallucination_bound}, we have
\begin{align*}
\norm{\tilde{J} - J}_{\infty} \leq \eta \norm{\tilde{J}}_{\infty}\norm{J}_{\infty},
\end{align*}
where $J$ is the recovery time-to-go function [cf. \defref{def:time_to_go}] and $\tilde{J}$ is the time-to-go function estimated by the \textsc{llm}.
\end{lemma}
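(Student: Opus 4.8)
The plan is to view $J$ and $\tilde J$ as the value functions of two stochastic shortest path problems that share the same \textsc{llm} action policy $p_{\bm{\theta}^{\prime}}(\cdot\mid\mathbf{s},\mathbf{I})$ but differ only in their state-transition kernels: $J$ evolves under the true dynamics $P$, whereas $\tilde J$ evolves under the \textsc{llm}'s predicted dynamics $p_{\bm{\theta}^{\prime}}(\cdot\mid\mathbf{s},\mathbf{a},\mathbf{I})$. By \defref{def:time_to_go} and the rollout estimate preceding (\ref{eq:rollout}), both functions vanish at the terminal state and, on the set $\tilde{\mathcal{S}}$ of non-terminal states (where the terminal value contributes nothing), satisfy the Bellman equations
\begin{align*}
J(\mathbf{s}) &= 1 + \sum_{\mathbf{a}}p_{\bm{\theta}^{\prime}}(\mathbf{a}\mid\mathbf{s},\mathbf{I})\sum_{\mathbf{s}'}P(\mathbf{s}'\mid\mathbf{s},\mathbf{a},\mathbf{I})J(\mathbf{s}'),\\
\tilde J(\mathbf{s}) &= 1 + \sum_{\mathbf{a}}p_{\bm{\theta}^{\prime}}(\mathbf{a}\mid\mathbf{s},\mathbf{I})\sum_{\mathbf{s}'}p_{\bm{\theta}^{\prime}}(\mathbf{s}'\mid\mathbf{s},\mathbf{a},\mathbf{I})\tilde J(\mathbf{s}').
\end{align*}

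First I would subtract these identities and apply the standard add-and-subtract decomposition to the product $p_{\bm{\theta}^{\prime}}(\mathbf{s}'\mid\cdots)\tilde J(\mathbf{s}') - P(\mathbf{s}'\mid\cdots)J(\mathbf{s}')$, splitting it as $p_{\bm{\theta}^{\prime}}(\mathbf{s}'\mid\cdots)\,[\tilde J-J](\mathbf{s}')$ plus $[p_{\bm{\theta}^{\prime}}-P](\mathbf{s}'\mid\cdots)\,J(\mathbf{s}')$. Writing $\Delta=\tilde J-J$, this produces a fixed-point identity $\Delta = \tilde{\mathcal{T}}\Delta + e$, where $\tilde{\mathcal{T}}$ is the sub-stochastic one-step transition operator induced on $\tilde{\mathcal{S}}$ by the \textsc{llm} policy together with the \textsc{llm}'s predicted dynamics, and the perturbation term is $e(\mathbf{s})=\sum_{\mathbf{a}}p_{\bm{\theta}^{\prime}}(\mathbf{a}\mid\mathbf{s},\mathbf{I})\sum_{\mathbf{s}'}[p_{\bm{\theta}^{\prime}}-P](\mathbf{s}'\mid\mathbf{s},\mathbf{a},\mathbf{I})J(\mathbf{s}')$. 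The total-variation hypothesis then bounds $e$ directly: pulling the absolute value inside the sums and using the definition of $\eta$ together with $|J(\mathbf{s}')|\le\norm{J}_{\infty}$ gives $\norm{e}_{\infty}\le\eta\norm{J}_{\infty}$.

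Next I would invert the fixed-point relation. Since $\norm{\tilde J}_{\infty}<\infty$ by assumption (b), the \textsc{llm}-induced policy is proper in the stochastic-shortest-path sense, so $I-\tilde{\mathcal{T}}$ is invertible with nonnegative resolvent $(I-\tilde{\mathcal{T}})^{-1}=\sum_{k\ge 0}\tilde{\mathcal{T}}^{k}$, whence $\Delta=(I-\tilde{\mathcal{T}})^{-1}e$. Because this resolvent is monotone and $|e|\le\norm{e}_{\infty}\mathbf{1}$ pointwise, it follows that $|\Delta|\le\norm{e}_{\infty}\,(I-\tilde{\mathcal{T}})^{-1}\mathbf{1}$. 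The crucial observation is that $(I-\tilde{\mathcal{T}})^{-1}\mathbf{1}=\tilde J$, since $\tilde J$ is precisely the fixed point of $V\mapsto\mathbf{1}+\tilde{\mathcal{T}}V$; combining the pieces yields $\norm{\Delta}_{\infty}\le\norm{e}_{\infty}\norm{\tilde J}_{\infty}\le\eta\norm{J}_{\infty}\norm{\tilde J}_{\infty}$, which is the claimed inequality.

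The main obstacle is the resolvent argument in the last paragraph: justifying that $I-\tilde{\mathcal{T}}$ is invertible with a convergent, entrywise-nonnegative Neumann series and that $(I-\tilde{\mathcal{T}})^{-1}\mathbf{1}$ coincides with $\tilde J$. This is exactly where the finiteness hypotheses of \propref{theorem:hallucination_bound} enter, through properness of the induced policy, and I would invoke the standard \textsc{ssp} theory \cite{10.5555/1396348}. The accompanying subtlety is bookkeeping around the terminal state: one must restrict all operators to $\tilde{\mathcal{S}}$ and treat $(1,1,1,1,1,1)$ as absorbing, so that $\tilde{\mathcal{T}}$ is genuinely sub-stochastic and the Neumann series converges.
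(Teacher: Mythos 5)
Your proposal is correct and follows essentially the same route as the paper: the identical add-and-subtract decomposition yielding $\Delta=\tilde{\mathcal{T}}\Delta+e$, the same total-variation bound $\norm{e}_{\infty}\le\eta\norm{J}_{\infty}$, and the same Neumann-series/properness argument identifying $(I-\tilde{\mathcal{T}})^{-1}\mathbf{1}$ with $\tilde{J}$. The only cosmetic difference is that you bound $|\Delta|$ pointwise via monotonicity of the nonnegative resolvent, whereas the paper invokes submultiplicativity of the $\ell_{\infty}$ operator norm together with the identity $\norm{(\mathds{1}-\tilde{\mathbf{F}})^{-1}}_{\infty}=\norm{(\mathds{1}-\tilde{\mathbf{F}})^{-1}\mathbf{1}}_{\infty}$ for nonnegative matrices; these are the same fact.
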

\begin{proof}
  We first note that the vocabulary (i.e., the set of tokens) of any \textsc{llm} is finite. Therefore, the set of feasible response actions $\mathcal{A}$ is finite. As a consequence, the state predictions $p_{\bm{\theta}^{\prime}}(\mathbf{s}^{\prime} \mid \mathbf{s}, \mathbf{a}, \mathbf{I})$ define a transition probability matrix between (non-terminal) recovery states. We denote this matrix by $\Tilde{\mathbf{F}}$ and the corresponding matrix of the real system by $\mathbf{F}$, where $\Tilde{\mathbf{F}}_{\mathbf{s}\mathbf{s}^{\prime}}$ denotes the transition probability between the non-terminal states $\mathbf{s}$ and $\mathbf{s}^{\prime}$. Similarly, we use $\mathbf{J}$ and $\Tilde{\mathbf{J}}$ to denote the vectors obtained by applying the functions $J$ and $\Tilde{J}$ to the set of non-terminal recovery states $\tilde{\mathcal{S}}$, i.e., all states for which $\mathbf{s} \neq (1,1,1,1,1,1)$, where $\mathbf{J}_{\mathbf{s}}$ denotes the expected recovery time-to-go from the non-terminal state $\mathbf{s}$.

Since the goal is to minimize the recovery time, we can express the recovery time-to-go function recursively by defining a stage cost of $1$ for each response action taken. Using this formulation of the recovery time-to-go, we have
\begin{equation}
\begin{split}  
\mathbf{J} &= \mathbf{1} + \mathbf{F}\mathbf{J},\\
\Tilde{\mathbf{J}} &= \mathbf{1} + \Tilde{\mathbf{F}}\Tilde{\mathbf{J}},\label{eq:matrix_bellman}
\end{split}
\end{equation}
where $\mathbf{1}$ denotes the vector of all ones. Given these Bellman equations, the difference $\Tilde{\mathbf{J}}-\mathbf{J}$ can be written as
\begin{align*}
  \Tilde{\mathbf{J}}-\mathbf{J} &= \left(\mathbf{1} + \Tilde{\mathbf{F}}\Tilde{\mathbf{J}}\right) - \left(\mathbf{1} + \mathbf{F}\mathbf{J}\right)\\
                                &= \tilde{\mathbf{F}}\tilde{\mathbf{J}}-\mathbf{F}\mathbf{J}\\
                                &=\tilde{\mathbf{F}}\tilde{\mathbf{J}}-\mathbf{F}\mathbf{J} + \tilde{\mathbf{F}}\mathbf{J}- \tilde{\mathbf{F}}\mathbf{J}\\
                              &=\tilde{\mathbf{F}}(\tilde{\mathbf{J}}-\mathbf{J}) + (\tilde{\mathbf{F}}-\mathbf{F})\mathbf{J}.
\end{align*}
Solving for $(\tilde{\mathbf{J}}-\mathbf{J})$ gives
\begin{align}
\tilde{\mathbf{J}}-\mathbf{J} &= \tilde{\mathbf{F}}(\tilde{\mathbf{J}}-\mathbf{J}) + (\tilde{\mathbf{F}}-\mathbf{F})\mathbf{J} \nonumber\\
  \implies (\mathds{1} - \tilde{\mathbf{F}})(\tilde{\mathbf{J}}-\mathbf{J}) &= (\tilde{\mathbf{F}}-\mathbf{F})\mathbf{J} \nonumber\\
\implies \tilde{\mathbf{J}}-\mathbf{J}  &= (\mathds{1} - \tilde{\mathbf{F}})^{-1}(\tilde{\mathbf{F}}-\mathbf{F})\mathbf{J},\label{eq:inverse_pf}
\end{align}
where $\mathds{1}$ denotes the identity matrix.

Since $\norm{J}_{\infty}$ and $\norm{\tilde{J}}_{\infty}$ are assumed finite, $\norm{\mathbf{J}}_{\infty}$ and $\norm{\Tilde{\mathbf{J}}}_{\infty}$ are also finite. As a consequence, the linear systems in (\ref{eq:matrix_bellman}) have unique solutions. Consequently, the inverse in (\ref{eq:inverse_pf}) exists. Taking the supremum norm on both sides of the final expression in (\ref{eq:inverse_pf}), we have
\begin{align}
  \norm{\tilde{\mathbf{J}}-\mathbf{J}}_{\infty}  &= \norm{(\mathds{1} - \tilde{\mathbf{F}})^{-1}(\tilde{\mathbf{F}}-\mathbf{F})\mathbf{J}}_{\infty} \nonumber\\
  &\leq \norm{(\mathds{1} - \tilde{\mathbf{F}})^{-1}}_{\infty}\norm{(\tilde{\mathbf{F}}-\mathbf{F})\mathbf{J}}_{\infty}, \label{eq:mult_bound}
\end{align}
where the last inequality follows from the sub-multiplicative property of the supremum norm. Hence, it suffices to show that the right-hand side in (\ref{eq:mult_bound}) is bounded by $\eta \norm{\tilde{J}}_{\infty}\norm{J}_{\infty}$. We start by showing that $\norm{(\mathds{1} - \tilde{\mathbf{F}})^{-1}}_{\infty}=\norm{\tilde{\mathbf{J}}}_{\infty}$. In view of (\ref{eq:matrix_bellman}), we have
\begin{align}
\tilde{\mathbf{J}} &= \mathbf{1} + \tilde{\mathbf{F}}\tilde{\mathbf{J}}\nonumber\\
\implies  (\mathds{1}-\tilde{\mathbf{F}})\tilde{\mathbf{J}} &= \mathbf{1}\nonumber\\
\implies   \tilde{\mathbf{J}}&=  (\mathds{1}-\tilde{\mathbf{F}})^{-1}\mathbf{1} \nonumber\\
  \implies   \norm{\tilde{\mathbf{J}}}_{\infty}&=  \norm{(\mathds{1}-\tilde{\mathbf{F}})^{-1}\mathbf{1}}_{\infty}. \nonumber
\end{align}
Since $\norm{\tilde{J}}_{\infty}$ is assumed finite, the expected time to reach the terminal state $\mathbf{s}=(1,1,1,1,1,1)$ from any non-terminal state $\mathbf{s} \in \tilde{\mathcal{S}}$ is finite. As a consequence, the spectral radius of the transition matrix between the non-terminal states, i.e., $\tilde{\mathbf{F}}$, must be strictly less than $1$. Therefore, we can expand $(\mathds{1}-\tilde{\mathbf{F}})^{-1}$ using the Neumann series representation as
\begin{align*}
(\mathds{1}-\tilde{\mathbf{F}})^{-1} = \sum_{k=0}^{\infty}\tilde{\mathbf{F}}^k.
\end{align*}
Because the matrix $\tilde{\mathbf{F}}$ is non-negative, all of its powers are also non-negative. As a consequence, $(\mathds{1}-\tilde{\mathbf{F}})^{-1}$ is non-negative. For any non-negative matrix $\mathbf{A}$, we have $\norm{\mathbf{A}}_{\infty}=\norm{\mathbf{A}\mathbf{1}}_{\infty}$. Consequently, we obtain
\begin{align}
\norm{\tilde{\mathbf{J}}}_{\infty}&=  \norm{(\mathds{1}-\tilde{\mathbf{F}})^{-1}\mathbf{1}}_{\infty}\nonumber\\ 
 &=\norm{(\mathds{1}-\tilde{\mathbf{F}})^{-1}}_{\infty}. \label{eq:pff_bound_1}
\end{align}
Now consider the second factor in the right-hand side of (\ref{eq:mult_bound}), i.e., $\norm{(\tilde{\mathbf{F}}-\mathbf{F})\mathbf{J}}_{\infty}$. Fix any recovery state $\mathbf{s}$. We have
\begin{align*}
  \left|\bigg(\left(\tilde{\mathbf{F}} - \mathbf{F}\right)\mathbf{J}\bigg)_{\mathbf{s}}\right| &= \left|\sum_{\mathbf{s}^{\prime} \in \tilde{\mathcal{S}}} \left(\tilde{\mathbf{F}}_{\mathbf{s}\mathbf{s}^{\prime}} - \mathbf{F}_{\mathbf{s}\mathbf{s}^{\prime}}\right)\mathbf{J}_{\mathbf{s}^{\prime}}\right|\\
  &\numleq{a} \sum_{\mathbf{s}^{\prime} \in \tilde{\mathcal{S}}} \left|\tilde{\mathbf{F}}_{\mathbf{s}\mathbf{s}^{\prime}} - \mathbf{F}_{\mathbf{s}\mathbf{s}^{\prime}}\right|\cdot |\mathbf{J}_{\mathbf{s}^{\prime}}|\\
  &\leq \left(\sum_{\mathbf{s}^{\prime} \in \tilde{\mathcal{S}}} \left|\tilde{\mathbf{F}}_{\mathbf{s}\mathbf{s}^{\prime}} - \mathbf{F}_{\mathbf{s}\mathbf{s}^{\prime}}\right|\right)\norm{\mathbf{J}}_{\infty}\\
&\leq \eta \norm{\mathbf{J}}_{\infty},
\end{align*}
where we use the triangle inequality to move the absolute value inside the sum and then the fact that $|ab|=|a||b|$ to obtain (a). Since this bound holds for any state $\mathbf{s}$, we have
\begin{align*}
\norm{(\tilde{\mathbf{F}}-\mathbf{F})\mathbf{J}}_{\infty} \leq \eta \norm{\mathbf{J}}_{\infty}.
\end{align*}
Substituting this bound and (\ref{eq:pff_bound_1}) into (\ref{eq:mult_bound}) yields
\begin{align*}
  \norm{\tilde{\mathbf{J}}-\mathbf{J}}_{\infty} &\leq \norm{(\mathds{1} - \tilde{\mathbf{F}})^{-1}}_{\infty}\norm{(\tilde{\mathbf{F}}-\mathbf{F})\mathbf{J}}_{\infty}\\
                                                &\leq \eta \norm{\tilde{\mathbf{J}}}_{\infty}\norm{\mathbf{J}}_{\infty}.\\
\end{align*}
Since the recovery time-to-go from the terminal state is $0$ [cf. \defref{def:time_to_go}], we have $\norm{\tilde{\mathbf{J}}-\mathbf{J}}_{\infty}=\norm{\tilde{J}-J}_{\infty}$, $\norm{\tilde{\mathbf{J}}}_{\infty}=\norm{\tilde{J}}_{\infty}$, and $\norm{\mathbf{J}}_{\infty}=\norm{J}_{\infty}$. The proof is thus complete.
\end{proof}
Given \lemmaref{lemma:one}, we are now ready to derive the proof of \propref{theorem:hallucination_bound}. The event that a hallucinated action $\hat{\mathbf{a}}$ is selected over a non-hallucinated action $\tilde{\mathbf{a}}$ in (\ref{eq:rollout}) implies
\begin{align*}
\tilde{Q}(\tilde{\mathbf{s}}, \hat{\mathbf{a}}) \leq \tilde{Q}(\tilde{\mathbf{s}}, \tilde{\mathbf{a}}).
\end{align*}
To show that this inequality cannot hold under the proposition's assumptions, we start by bounding the difference between $\tilde{Q}$ and $Q$, where $Q(\mathbf{s}, \mathbf{a})$ is the true expected recovery time-to-go when taking response action $\mathbf{a}$ in state $\mathbf{s}$ and $\Tilde{Q}(\mathbf{s}, \mathbf{a})$ is the \textsc{llm}'s estimate. We have
\begin{align*}
  &|\tilde{Q}(\mathbf{s}, \mathbf{a})-Q(\mathbf{s}, \mathbf{a})| = \\
  &\bigg\vert \sum_{\mathbf{s}^{\prime} \in \tilde{\mathcal{S}}}p_{\bm{\theta}^{\prime}}(\mathbf{s}^{\prime} \mid \mathbf{s}, \mathbf{a},\mathbf{I})\tilde{J}(\mathbf{s}^{\prime})  - 
    \sum_{\mathbf{s}^{\prime}\in \tilde{\mathcal{S}}}P(\mathbf{s}^{\prime} \mid \mathbf{s}, \mathbf{a}, \mathbf{I})J(\mathbf{s}^{\prime}) \bigg\vert\\
&= \bigg\vert \sum_{\mathbf{s}^{\prime}\in \tilde{\mathcal{S}}}p_{\bm{\theta}^{\prime}}(\mathbf{s}^{\prime} \mid \mathbf{s}, \mathbf{a}, \mathbf{I})\tilde{J}(\mathbf{s}^{\prime})  - 
  \sum_{\mathbf{s}^{\prime} \in \tilde{\mathcal{S}}}P(\mathbf{s}^{\prime} \mid \mathbf{s}, \mathbf{a}, \mathbf{I})J(\mathbf{s}^{\prime}) + \\
& \quad\quad \sum_{\mathbf{s}^{\prime} \in \tilde{\mathcal{S}}}p_{\bm{\theta}^{\prime}}(\mathbf{s}^{\prime} \mid \mathbf{s}, \mathbf{a}, \mathbf{I})J(\mathbf{s}^{\prime})-\sum_{\mathbf{s}^{\prime} \in \tilde{\mathcal{S}}}p_{\bm{\theta}^{\prime}}(\mathbf{s}^{\prime} \mid \mathbf{s}, \mathbf{a}, \mathbf{I})J(\mathbf{s}^{\prime}) \bigg\vert\\
  &=\bigg\vert \sum_{\mathbf{s}^{\prime} \in \tilde{\mathcal{S}}}p_{\bm{\theta}^{\prime}}(\mathbf{s}^{\prime} \mid \mathbf{s}, \mathbf{a}, \mathbf{I})\left(\tilde{J}(\mathbf{s}^{\prime})- J(\mathbf{s}^{\prime})\right) - \\
  &  \quad\quad\sum_{\mathbf{s}^{\prime} \in \tilde{\mathcal{S}}}\Big(P(\mathbf{s}^{\prime} \mid \mathbf{s}, \mathbf{a}, \mathbf{I}) - p_{\bm{\theta}^{\prime}}(\mathbf{s}^{\prime} \mid \mathbf{s}, \mathbf{a}, \mathbf{I})\Big)J(\mathbf{s}^{\prime})\bigg\vert\\
  &\numleq{a} \bigg\vert\sum_{\mathbf{s}^{\prime} \in \tilde{\mathcal{S}}} p_{\bm{\theta}^{\prime}}(\mathbf{s}^{\prime} \mid \mathbf{s}, \mathbf{a}, \mathbf{I})\left(\tilde{J}(\mathbf{s}^{\prime})- J(\mathbf{s}^{\prime})\right)\bigg\vert + \\
  &  \quad\quad\bigg\vert\sum_{\mathbf{s}^{\prime} \in \tilde{\mathcal{S}}}\Big(P(\mathbf{s}^{\prime} \mid \mathbf{s}, \mathbf{a}, \mathbf{I}) - p_{\bm{\theta}^{\prime}}(\mathbf{s}^{\prime} \mid \mathbf{s}, \mathbf{a}, \mathbf{I})\Big)J(\mathbf{s}^{\prime})\bigg\vert\\
  &\numleq{b} \sum_{\mathbf{s}^{\prime} \in \tilde{\mathcal{S}}}\bigg\vert p_{\bm{\theta}^{\prime}}(\mathbf{s}^{\prime} \mid \mathbf{s}, \mathbf{a}, \mathbf{I})\left(\tilde{J}(\mathbf{s}^{\prime})- J(\mathbf{s}^{\prime})\right)\bigg\vert + \\
  &  \quad\quad\sum_{\mathbf{s}^{\prime} \in \tilde{\mathcal{S}}}\bigg\vert\Big(P(\mathbf{s}^{\prime} \mid \mathbf{s}, \mathbf{a}, \mathbf{I}) - p_{\bm{\theta}^{\prime}}(\mathbf{s}^{\prime} \mid \mathbf{s}, \mathbf{a}, \mathbf{I})\Big)J(\mathbf{s}^{\prime})\bigg\vert\\    
  &\numleq{c} \sum_{\mathbf{s}^{\prime} \in \tilde{\mathcal{S}}}\bigg\vert p_{\bm{\theta}^{\prime}}(\mathbf{s}^{\prime} \mid \mathbf{s}, \mathbf{a}, \mathbf{I})\left(\tilde{J}(\mathbf{s}^{\prime}) - J(\mathbf{s}^{\prime})\right)\bigg\vert + \\
  &\quad\quad\sum_{\mathbf{s}^{\prime} \in \tilde{\mathcal{S}}}\bigg\vert p_{\bm{\theta}^{\prime}}(\mathbf{s}^{\prime} \mid \mathbf{s}, \mathbf{a}, \mathbf{I}) - P(\mathbf{s}^{\prime} \mid \mathbf{s}, \mathbf{a}, \mathbf{I})\bigg\vert \norm{J}_{\infty}\\
&\leq \norm{\tilde{J}-J}_{\infty} + \eta \norm{J}_{\infty}\\
&\numleq{d} \underbrace{\eta \norm{\tilde{J}}_{\infty}\norm{J}_{\infty} + \eta \norm{J}_{\infty}}_{=\Delta},
\end{align*}
where (a)-(b) follow from the triangle inequality; (c) uses the fact that $\norm{\mathbf{a}\mathbf{b}}_{\infty} \leq \norm{\mathbf{a}}_{\infty}\norm{\mathbf{b}}_{\infty}$; and (d) follows from \lemmaref{lemma:one}. This bound implies that
\begin{align}\label{eq:pf_h_bound_399}
&Q(\mathbf{s}, \mathbf{a}) - \Delta \leq \Tilde{Q}(\mathbf{s}, \mathbf{a}) \leq Q(\mathbf{s}, \mathbf{a}) + \Delta.
\end{align}  
Now, if a hallucinated action $\hat{\mathbf{a}}$ is selected over a non-hallucinated action $\tilde{\mathbf{a}}$ in (\ref{eq:rollout}), we must have
\begin{align*}
&\tilde{Q}(\mathbf{s}, \hat{\mathbf{a}}) \leq \tilde{Q}(\mathbf{s}, \tilde{\mathbf{a}}).
\end{align*}
Combining this inequality with (\ref{eq:pf_h_bound_399}), we have
\begin{equation}\label{eq:pf_h_bound_3}
\begin{split}  
 Q(\mathbf{s}, \hat{\mathbf{a}}) - \Delta &\leq \tilde{Q}(\mathbf{s}, \hat{\mathbf{a}}) \\
 &\leq \tilde{Q}(\mathbf{s}, \tilde{\mathbf{a}}) \\
                                                                 &\leq Q(\mathbf{s}, \tilde{\mathbf{a}}) + \Delta\\
\implies Q(\mathbf{s}, \hat{\mathbf{a}})-Q(\mathbf{s}, \tilde{\mathbf{a}}) &\leq 2\Delta.
\end{split}
\end{equation}
Next, since $\hat{\mathbf{a}}$ is hallucinated and $\tilde{\mathbf{a}}$ is not, we have
\begin{align*}
Q(\mathbf{s}, \hat{\mathbf{a}}) &= 1 + \mathbb{E}_{\mathbf{s}^{\prime}}[J(\mathbf{s}^{\prime}) \mid \hat{\mathbf{a}}, \mathbf{s}, \mathbf{I}] = 1+ J(\mathbf{s}),\\
Q(\mathbf{s}, \tilde{\mathbf{a}}) &= 1 + \mathbb{E}_{\mathbf{s}^{\prime}}[J(\mathbf{s}^{\prime}) \mid \tilde{\mathbf{a}}, \mathbf{s}, \mathbf{I}] \leq  1 + J(\mathbf{s}) - \delta.
\end{align*}
Substituting $Q(\mathbf{s}, \hat{\mathbf{a}})=1+ J(\mathbf{s})$ into the inequality, we obtain
\begin{align*}
  Q(\mathbf{s}, \hat{\mathbf{a}}) &\geq Q(\mathbf{s}, \tilde{\mathbf{a}}) + \delta\\
  \implies \delta &\leq Q(\mathbf{s}, \hat{\mathbf{a}}) - Q(\mathbf{s}, \tilde{\mathbf{a}}) \leq 2\Delta\\
                                                         &=2\left(\eta \norm{\tilde{J}}_{\infty}\norm{J}_{\infty} + \eta \norm{J}_{\infty}\right)\\
                                                         &=2\eta\norm{J}_{\infty}\left(\norm{\tilde{J}}_{\infty} + 1\right).
\end{align*}
Since the conditions of the proposition state that
\begin{align*}
\delta > 2\eta\norm{J}_{\infty}\left(\norm{\tilde{J}}_{\infty} + 1\right),
\end{align*}  
we conclude that whenever a non-hallucinated action exists, it will be selected by the minimization (\ref{eq:rollout}). \qed

\section{Notation}\label{appendix:notation}
Our notation is summarized in \tableref{tab:notation}.
\begin{table}
  \centering
  \scalebox{0.72}{
    \begin{tabular}{ll} \toprule
\rowcolor{lightgray}
      {\textit{Notation(s)}} & {\textit{Description}} \\ \midrule
      $\mathbf{a}$ & Response action; cf. \sectionref{sec:decision_planning}.  \\
      $\mathbf{s}, \tilde{\mathbf{s}}$ & Recovery state [cf.~(\ref{eq:state_def})] and predicted state.\\
      $\mathbf{I}$ & Initial information about the incident (e.g., logs).\\      
      $T$ & Recovery time; cf. \sectionref{sec:decision_planning}.  \\
      $p_{\bm{\theta}}, \bm{\theta}$ & the token distribution of an \textsc{llm} and its parameters; cf. (\ref{eq:next_token_pred}).  \\
      $\bm{\theta}^{\prime}$ & Fine-tuned parameter vector of an \textsc{llm}; cf. \sectionref{sec:fine_tuning}.  \\      
      $\mathcal{D}$ & Instruction dataset for fine-tuning; cf. \sectionref{sec:fine_tuning}. \\
      $\mathbf{x}, \mathbf{y}$ & Instruction and desired output; cf. \sectionref{sec:fine_tuning}. \\
      $N$ & Number of candidate actions to evaluate; cf. \sectionref{sec:decision_planning}. \\
      $M$ & Number of samples to estimate expected values in \myalgref{alg:our_method}. \\      
      $\tilde{\mathbf{a}}_t$ & The response action selected after planning; cf.~(\ref{eq:rollout}).  \\
      $J, \Tilde{J}$ & Recovery time-to-go functions (true and estimated); cf.\sectionref{sec:decision_planning}.  \\
      $Q, \Tilde{Q}$ & Q-functions (true and estimated by \textsc{llm}); cf. \sectionref{sec:decision_planning}.  \\
      $\mathcal{S}, \mathcal{A}$ & Sets of recovery states and response actions; cf. \sectionref{sec:decision_planning}.  \\
      $\tilde{\mathcal{S}}$ & Set of non-terminal recovery states; cf. \sectionref{sec:decision_planning}.  \\
      $\mathcal{A}^N_t$ & The set of $N$ candidate actions at time step $t$; cf. \sectionref{sec:decision_planning}.  \\                       
    \bottomrule\\
  \end{tabular}}
  \caption{Notation.}\label{tab:notation}
\end{table}
\section{Experimental Setup}\label{appendix:hyperparameters}
All computations are performed using $4\times $\textsc{quadro rtx 8000} \textsc{gpus}. The hyperparameters that we use for fine-tuning and for instantiating \textsc{ppo} are listed in \tableref{tab:hyperparams}. Parameters not listed in \tableref{tab:hyperparams} are set to default values.

\begin{table}
  \centering
  \scalebox{0.7}{
    \begin{tabular}{ll} \toprule
\rowcolor{lightgray}
    {\textit{Parameter(s)}} & {\textit{Value(s)}} \\ \midrule
      \textsc{lora} rank $r$ & $64$\\
      \textsc{lora} $\alpha$ & $128$\\
      \textsc{lora} dropout & $0.05$\\
      Learning rate & $0.00095$\\
      Batch size & $5$ \\
      Gradient accumulation steps & $16$ \\
      Temperature & $0.6$ \\
      Number of training epochs & $2$ \\
      Quantization & $4$ bit \\
    \midrule
    \textsc{ppo} \cite[Alg. 1]{ppo} &   \\
    \midrule      
    Learning rate, \# hidden layers  & $5148 \cdot 10^{-5}$, $1$, \\
    \# Neurons/layer & $64$\\
    \# Steps between updates, & $2048$,\\      
    Batch size, discount factor $\gamma$ & $16$, $0.99$\\
    \textsc{gae} $\lambda$, clip range, entropy coefficient & $0.95$, $0.2$, $2\cdot 10^{-4}$\\
    Value coefficient, max gradient norm & $0.102$, $0.5$\\
    Feature representation & the original cyborg features \cite{cyborg} \& \\
                      & one-hot encoded scan-state \& \\
                      & decoy-state for each node\\      
    \bottomrule\\
  \end{tabular}}
  \caption{Hyperparameters.}\label{tab:hyperparams}
\end{table}

\section{Dataset Generation}\label{app:example_dataset}
To generate the dataset of instruction-answer pairs for fine-tuning, we use a combination of log data from our testbed and synthetic data generated by frontier \textsc{llm}s. Specifically, we first run the attacks listed in \tableref{tab:attacker_actions} on our testbed. Such runs generate system measurements and logs (e.g., \textsc{snort} alerts \cite{snort}), based on which we construct $500$ instruction-answer pairs. Each instruction consists of log data and previously applied response actions, as well as a task to either generate a response action or predict the current recovery state; cf.~\defref{def:recovery_state}. Similarly, each answer either consists of the true recovery state or the optimal response action, both of which are manually selected based on knowledge about the incident.

Since these $500$ instruction-answer pairs are too few for effective fine-tuning, we then expand the dataset using synthetic data generated by prompting \textsc{gemini 2.5} \cite{comanici2025gemini25pushingfrontier} and \textsc{openai o3} \cite{openai2024gpt4technicalreport} with our testbed examples to generate new examples of similar structure but for different types of systems and attacks, yielding a total dataset of size $68,000$. This dataset covers a diverse range of attacks, system architectures, and log data. Each instruction-answer pair follows a specific \textsc{json} structure. \Figref{fig:dataset} shows the distributions of token counts and \textsc{mitre att\&ck} tactics \cite{strom2018mitre} in our dataset. We see that most incidents are described by around $1200$ tokens, and the most common attacker tactics are \textsc{initial access} and \textsc{execution}. The prompt templates that we use are available at \cite{llm_source_kim}.

Our approach of combining testbed examples with synthetic examples is inspired by the studies presented in \cite{wang2023farcamelsgoexploring} and \cite{yu2025finemedlmo1enhancingmedicalreasoning}, which successfully used similar approaches to generate fine-tuning datasets for other domains, e.g., the medical domain.

%\tikzexternaldisable
\begin{figure}
  \centering
%\tikzsetnextfilename{dataset}    
  \scalebox{0.7}{
   \input{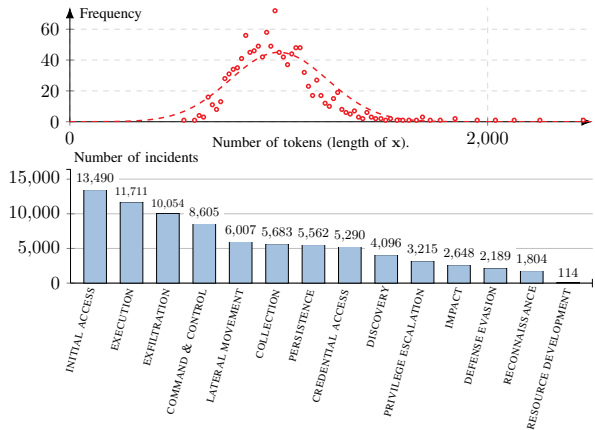}    
  }
  \caption{Distributions of the number of tokens (upper plot) and \textsc{mitre att\&ck tactics} \cite{strom2018mitre} (lower plot) in our dataset of instruction-answer pairs ($\mathbf{x}$, $\mathbf{y}$), which we use for fine-tuning the \textsc{llm}. The dataset is available at \cite{llm_source_kim}.}
  \label{fig:dataset}
\end{figure}
%\tikzexternalenable

\begin{table}
  \centering
\scalebox{0.78}{  
\begin{tabular}{lll} \toprule
\rowcolor{lightgray}
  {\textit{Type}} & {\textit{Actions}} & {\textsc{mitre att\&ck} technique} \\ \midrule
  Reconnaissance  & \tcpp \syn scan, \udp scan & \textsc{t1046} service scanning\\
                  & \tcpp \xmas scan & \textsc{t1046} service scanning \\
                  & \vulscan & \textsc{t1595} active scanning \\
                  & ping-scan & \textsc{t1018} system discovery\\\midrule
%  & &\\
  Brute-force & \telnet, \ssh & \textsc{t1110} brute force\\
                  & \ftp, \cassandra & \textsc{t1110} brute force\\
                  &  \irc, \mongo, \mysql & \textsc{t1110} brute force\\
                  & \smtp, \postgres & \textsc{t1110} brute force\\\midrule
%                  &\\
  Exploit & \cve-2017-7494 & \textsc{t1210} service exploitation\\
                  &\cve-2015-3306 & \textsc{t1210} service exploitation\\
                  & \cve-2010-0426 & \textsc{t1068} privilege escalation\\
                  & \cve-2015-5602 & \textsc{t1068} privilege escalation\\
                  & \cve-2015-1427 & \textsc{t1210} service exploitation\\
                  & \cve-2014-6271 & \textsc{t1210} service exploitation\\
                  & \cve-2016-10033 & \textsc{t1210} service exploitation\\
                  & \textsc{sql} injection (\cwe-89) & \textsc{t1210} service exploitation \\
  \bottomrule\\
\end{tabular}}
\caption{Attacker actions executed on our testbed to generate the initial examples for our training dataset, which we use for fine-tuning the \textsc{llm}. Actions are mapped to the corresponding vulnerabilities they exploit, as indicated by the \cve{} \cite{cve} and \cwe{} \cite{cwe} identifiers. The actions are also linked to the corresponding attack techniques in the \textsc{mitre att\&ck} taxonomy \cite{strom2018mitre}.}\label{tab:attacker_actions}
\end{table}
\bibliographystyle{IEEEtran}
\bibliography{references,url}
\end{document}